\makeatother \providecommand{\U}[1]{\protect \rule{.1in}{.1in}}
\numberwithin{equation}{section}
\def\t#1{\widetilde{#1}}
\def\h#1{\widehat{#1}}
\def\b#1{\bar{#1}}
\newcounter{NN}
\newtheorem{remark}[]{Remark}
\newtheorem{theorem}[]{Theorem}
\newtheorem{lemma}[NN]{Lemma}
\title{Connection between the symmetric discrete AKP system and bilinear ABS lattice equations}
\author{
Jing Wang$^{1,2,3}$\footnote{Email: jwang0612@shu.edu.cn; w.iac22157@kurenai.waseda.jp},~~
Da-jun Zhang$^{1,2}$\footnote{Email: djzhang@staff.shu.edu.cn},
~~Ken-ichi Maruno$^{3}$\footnote{Email: kmaruno@waseda.jp}
\\
\leftline{\small $^{1}$Department  of Mathematics, Shanghai University, Shanghai 200444, P.R. China}\\
\leftline{\small $^{2}$Newtouch Center for Mathematics of Shanghai University,  Shanghai 200444, P.R. China
}\\
\leftline{\small $^{3}$Department of Applied Mathematics, Faculty of Science and Engineering, Waseda University,}\\
\leftline{\small \quad Shinjuku-ku, Tokyo 169-8555, Japan}
}
\date{\today}
\begin{document}

\maketitle

\begin{abstract}
In this paper, we show that all the bilinear Adler-Bobenko-Suris (ABS) equations (except Q2 and Q4)
can be obtained from symmetric discrete AKP system
by taking proper reductions and continuum limits.
Among the bilinear ABS equations,
a simpler bilinear form of the ABS H2 equation is given.
In addition,  an  8-point 3-dimensional lattice equation and
an  8-point 4-dimensional lattice equation are obtained as by-products.
Both of them can be considered as extensions of the symmetric discrete AKP equation.

\begin{description}
\item[Keywords:] Adler-Bobenko-Suris equations, symmetric discrete AKP system,
bilinear equation, reduction, continuum limit
\end{description}
\end{abstract}


\section{Introduction}\label{sec-1}

There are intrinsic relations between integrable systems in different dimensions.
In discrete case, the discrete AKP (A-type Kadomtsev-Petviashvili) equation \cite{Hirota-1981,Miwa-1982},
also known as the Hirota equation or the Hirota-Miwa equation,
is fundamental in the sense that many lower dimensional (bilinear) discrete equations,
such as the discrete Korteweg-de Vries (dKdV), discrete modified KdV (dmKdV), discrete sine-Gordon,
discrete Boussinesq and Toda lattice equations, can be reduced from it \cite{DJM-1983,RGS-1992,Zabrodin-1997}.
Note that the discrete AKP (dAKP) equation is multi-dimensionally consistent \cite{ABS-2012}.
This means the high order dAKP equations of $d$-dimensions (see \cite{OHTI-1993})
can be essentially considered as  combinations of the dAKP equations defined on
any three directions in the $d$-dimensional space.
Another notable progress was made by \cite{LNSZ-2021}
in which it was shown that the dAKP $\tau$-function with reflection symmetry
can be shared by the discrete BKP (dBKP, or Miwa's equation) equation as well,
in alignment with which the result is the so-called ``KdV solves BKP'' in continuous case \cite{Alexandrov-2021}.
Note also that when the dAKP  $\tau$-function allows reflection symmetry, it is composed by the
plane wave factors (PWFs) of the
dKdV-type.
All these facts indicate that when implementing reductions of the dAKP equation,
not only the  dAKP equation itself, but also its reflected forms and high order equations
(collectively referred to as the symmetric dAKP system) and also the symmetric dBKP system
can be utilized for reductions in getting  the bilinear dKdV-type equations.

The Adler-Bobenko-Suris (ABS) equations (see \eqref{ABS-list}) provide a complete classification of
multi-dimensionally consistent affine linear quadrilateral models
with certain additional conditions \cite{ABS-2003},
which are known as H1, H2, H3, A1, A2, Q1, Q2, Q3 and Q4.
These equations have been solved in various ways
\cite{AHN-2007,AHN-2008,NAH-2009,HZ-2009,NA-IMRN-2010,AN-2010,BJ-2010,Butler-2012,ZZ-2013,SNZ-2014,
ZZ-2017,ZZ-2019}.
In fact, the PWFs of solutions (except Q4 which is the elliptic case)
and continuum limits \cite{V-SIGMA-2019,CMZ-2021}
indicate that the ABS equations are of the dKdV-type.
In this paper, we aim to derive the ABS equations (except  Q2 and Q4) in their bilinear forms
from the symmetric dAKP system.
Since A1 is connected to Q1 by $u\rightarrow (-1)^{n_1+n_2}u$ and A2 to
$\rm{Q3}_{\delta=0}$ by $u \rightarrow {u^{(-1)}}^{n_1+n_2}$,
these two equations will not be considered in this paper.
Some bilinear forms for  H1, H2, H3, Q1 and Q3 in the ABS list
involve more than one dependent variables (see \cite{HZ-2009,ZZ-2019},
while the symmetric dAKP equations exclusively involve a single $\tau$-function.
In this paper, we aim to bridge the gap between  the bilinear ABS equations
and the symmetric dAKP system.

The paper is organized as follows.
Firstly, as preliminary we list  necessary notations and the ABS equations in section\,\ref{sec-2}.
Then, in section\,\ref{sec-3}  we endeavor to reconfigure the bilinear ABS equations
using a single $\tau$-function by introducing suitable auxiliary continuous variables,
which means some functions will be explained as derivatives of the
$\tau$-function with respect to the auxiliary variables.
Correspondingly, the $\tau$-function will be presented in both Casoratian and Wronskian forms.
In section\,\ref{sec-4}, we recall the the symmetric dAKP system and their $\tau$-function.
Then in section\,\ref{sec-5}
we show what continuum limits and reductions to apply on the symmetric dAKP system
so that all the bilinear ABS equations can be obtained from reductions.
Besides, as by-products, an  8-point 3-dimensional lattice equation and
an 8-point 4-dimensional lattice equation associated with the symmetric dAKP system are found.
Appendix\,\ref{App-A} contains more details about these two equations.
Finally, conclusions are given in section\,\ref{sec-6}.

\section{Mathematical preliminary}\label{sec-2}

The functions of our interest in the fully discrete case are defined on $\mathbb{Z}^r$,
for example,
\begin{equation}
  \tau:=\tau((n_1,a_1),(n_2,a_2),\cdots,(n_r,a_r)),
\end{equation}
where $(n_1,n_2,\cdots,n_r)\in \mathbb{Z}^r$ stand  for the discrete coordinates
and $(a_1,a_2,\cdots,a_r)$ serve as spacing parameters for the corresponding directions.
By  $E_{n_i}$ we denote the forward shift operator in $n_i$ direction, e.g.,
\[E_{n_i}^{\pm j} \tau=
 \tau ((n_1,a_1),\cdots,(n_{i-1},a_{i-1}),(n_i\pm j,a_i),(n_{i+1},a_{i+1}),\cdots,(n_r,a_r)).
\]
We also introduce numerical superscripts and subscripts to denote shifts in different directions, say,
\begin{subequations}\label{tau-i}
\begin{align}
& \tau_i=E_{n_i} \tau,~~ \tau^i=E_{n_i}^{-1} \tau, ~~
\tau_{ij}=E_{n_j}E_{n_i} \tau,~~ \tau^{ij}=E_{n_j}^{-1}E_{n_i}^{-1} \tau, ~~
\tau_i^j=E_{n_j}^{-1}E_{n_i} \tau,\\
& \tau_{\,\bar i}=E_{n_1}E_{n_2}\cdots E_{n_{i-1}}E_{n_{i+1}}\cdots E_{n_r} \tau,~~ ~\cdots\cdots .
\end{align}
\end{subequations}

\begin{remark}\label{Rem-0}
To avoid any confusion, throughout the paper we only
use such notations for the functions denoted by $\tau$, $u$, $f, g$, $h$, $s$ and $v$.
\end{remark}

With these settings,
the (bilinear) dAKP equation \cite{Miwa-1982} takes the form
\begin{equation}\label{3D-dAKP-sec2}
A\equiv (a_1-a_2)\tau_3 \tau_{12}+(a_2-a_3) \tau_1\tau_{23}+(a_3-a_1) \tau_2 \tau_{13}=0,
\end{equation}
and the ABS equations are given as \cite{ABS-2003}
\begin{subequations}\label{ABS-list}
  \begin{align}
  \rm{ H }1:~~&(u-u_{12})(u_{1}-u_{2})-p+q=0, \label{H1}\\
   \mathrm{H} 2: ~~&(u-u_{12})(u_{1}-u_{2})-(p-q)(u+u_{1}+u_{2}+u_{12}+p+q)=0, \label{H2}\\
  \mathrm{H}3:~~ & p(u u_{1}+u_{2} u_{12})-q(u u_{2}+u_{1} u_{12})+\delta(p^{2}-q^{2})=0, \label{H3}\\
  \rm{A1}:~~& p(u+u_{2})(u_{1}+u_{12})-q(u+u_{1})(u_{2}+u_{12})-\delta^{2} p q(p-q)=0, \label{A1}\\
  \rm{ A2} :~~ &p(1\!-\!q^{2})(u u_{1}+u_{2} u_{12})\!-\!q(1-p^{2})(u u_{2}+u_{1} u_{12})
  -(p^{2}-q^{2})(1+u u_{1} u_{2} u_{12})=0, \label{A2}\\
  \mathrm{Q} 1:~~& p(u-u_{2})(u_{1}-u_{12})-q(u-u_{1})(u_{2}-u_{12}) +\delta^{2} p q(p-q)=0, \label{Q1}\\
  \mathrm{Q} 2:~~ & p(u-u_{2})(u_{1}-u_{12})-q(u-u_{1})(u_{2}-u_{12})\nonumber\\
  &+p q(p-q)(u+u_{1}+u_{2}+u_{12}-p^{2}+p q-q^{2})=0, \\
  \mathrm{Q} 3:~~ & p(1-q^{2})(u u_{2}+u_{1} u_{12})-q(1-p^{2})(u u_{1}+u_{2} u_{12})\nonumber\\
  &-\left(p^{2}-q^{2}\right)\left(u_{1} u_{2}+u u_{12}
  +\frac{\delta^{2}\left(1-p^{2}\right)\left(1-q^{2}\right)}{4 p q}\right)=0, \label{Q3}\\
  \mathrm{Q} 4:~~ &  \mathrm{sn} (p)(u u_{1}+u_{2} u_{12})
  -\mathrm{sn}(q)(u u_{2}+u_{1} u_{12}) -\mathrm{sn}(p-q)(u_{1} u_{2}+u u_{12}) \nonumber\\
& +\mathrm{sn}(p) \mathrm{sn}(q)
\mathrm{sn}(p-q)(1+k^{2} u u_{1} u_{2} u_{12})=0, \label{Q4}
\end{align}
\end{subequations}
where $u=u((n_1,p),(n_2,q))$ and the present form of Q4 is due to \cite{Hie}.

Next, we introduce the Wronskian and Casoratian.
Define a $N$-th order column vector
\begin{align}\label{column vector}
  \psi =(\psi_1,\psi_2,\cdots,\psi_N)^T,
\end{align}
where the entry
\begin{equation}
\psi_j:=\psi_j(n_1,n_2,\cdots,n_r,x_1,x_2,\cdots,x_l)
\end{equation}
is a function defined on $\mathbb{Z}^r\times \mathbb{R}^l$,
$\{n_j\}$ and $\{x_i\}$ serve as discrete and continuous independent variables respectively.
A Wronskian composed by the vector \eqref{column vector} and its derivatives with respect to a certain
variable $x_i$ is written as
\begin{align}
  W(\psi_1,\psi_2,\cdots,\psi_N)&:=\left\lvert \psi,\partial_{x_i}\psi,\cdots,\partial_{x_i}^{N-1}\psi \right\rvert
  \equiv\left\lvert 0,1,\cdots,N-1\right\rvert_{[x_i]}\equiv\bigl\lvert \widehat{ N-1}\bigr\rvert_{[x_i]},
\end{align}
where by the subscript $[x_i]$ we specially denote that the Wronskian is defined in terms of  $x_i$.
Note that we follow \cite{NimF-1983a} to use the compact notation $|\widehat{ N-1}|$.
Similarly, we also have
$\bigl\lvert \widehat{ N-2},N \bigr\rvert_{[x_i]}=\left\lvert 0,1,\cdots,N-2,N \right\rvert_{[x_i]}$.
Casoratians composed by the vector \eqref{column vector} with respect to the shift in $n_i$-direction
are given as
\begin{align*}
  C_1(\psi_1,\psi_2,\cdots,\psi_N)&\!:=\left\lvert \psi,E_{n_i}\psi,\cdots\!,E_{n_i}^{N-1}\psi \right\rvert
  \equiv\left\lvert 0,1,\cdots,N-1\right\rvert_{[n_i]}\equiv\bigl\lvert \widehat{ N-1}\bigr\rvert_{[n_i]},\\
  C_2(\psi_1,\psi_2,\cdots,\psi_N)&:
  =\left\lvert \psi,E_{n_i}\psi,\cdots,E_{n_i}^{N-2}\psi, E_{n_i}^{N}\psi \right\rvert\nonumber\\
  &\equiv\left\lvert 0,1,\cdots,N-2,N\right\rvert_{[n_i]}\equiv\bigl\lvert \widehat{ N-2},N\bigr\rvert_{[n_i]},\\
  C_3(\psi_1,\psi_2,\cdots,\psi_N)&:
  =\left\lvert E_{ n_i}^{-1}\psi,E_{n_i}\psi,\cdots,E_{n_i}^{N-2}\psi,E_{n_i}^{N}\psi \right\rvert \nonumber\\
  &\equiv\left\lvert -1,1,\cdots,N-2,N\right\rvert_{[n_i]}\equiv\bigl\lvert -1,\widetilde{ N-2}\bigr\rvert_{[n_i]}.
\end{align*}
Note that for a generic notation $\bigl\lvert \widehat{ N-1}\bigr\rvert_{[\lambda]}$,
it can be either a Wronskian or a Casoratian, depending on $\lambda=x_i$ or $\lambda=n_i$.

Finally, we look at bilinear derivatives and bilinear equations.
Hirota's bilinear $D$-operator is defined as
\begin{equation}\label{*}
D_x^nD_y^m f(x,y)\cdot g(x,y)
=(\partial_x-\partial_{x'})^n (\partial_y-\partial_{y'})^m f(x,y)  g(x',y')|_{x'=x,y'=y},
\end{equation}
where $f$ and $g$ are $\mathrm{C}^{\infty}$ functions with respect to $(x,y)$.
This definition can be easily extended to multidimensional cases,
while \eqref{*} serves as an illustration.
One can introduce Hirota's bilinear equation of the form
\begin{equation}
P(D_x, D_y) f(x,y)\cdot g(x,y)=0
\end{equation}
where $P(x,y)$ is a binary polynomial.
Note that such an equation has the so-called gauge property, say,
\begin{equation}
P(D_x, D_y) e^{ax+by}f(x,y)\cdot e^{ax+by} g(x,y)=e^{2(ax+by)}P(D_x, D_y) f(x,y)\cdot g(x,y)=0,
\end{equation}
where $a,b$ are constants.
A discrete bilinear equation takes a form \cite{HZ-2009}
\begin{equation}\label{*d}
  \sum_{j=1}^{\sigma}
  c_jf^{(j)}(n_1+\nu^{(j)}_1,\cdots,n_r+\nu^{(j)}_r)g^{(j)}(n_1+\mu^{(j)}_1,\cdots,n_r+\mu^{(j)}_r)=0,
\end{equation}
where $\{\nu_i^{(j)}, \mu_i^{(j)}\}$ are integers, and for each $i$, subject to
\[t_i=\nu_i^{(j)}+\mu_i^{(j)}=\nu_i^{(k)}+ \mu_i^{(k)},~~ \forall j,k\in\{1,2,\cdots, \sigma\}, \]
which guarantees the gauge property in  discrete cases \cite{HZ-2009}.
In this paper, we call $(t_1,t_2,\cdots, t_\sigma)$ to be the \emph{total index}
of the bilinear equation \eqref{*d}.

\section{Bilinear ABS equations in terms of single $\tau$-function}\label{sec-3}

Some bilinear forms for  H1, H2, H3, Q1 and Q3 in the ABS list
involve more than one dependent variables (see \cite{HZ-2009,ZZ-2019},
while the symmetric dAKP equations exclusively involve a single $\tau$-function.
In this section, we will revisit known bilinear ABS equations
and try to reconfigure them using a unified $\tau$-function and its derivatives
with respect to some auxiliary independent variables.

\subsection{H1}\label{sec-3-1}

For the H1 equation \eqref{H1},
after parameterizing $(p,q) $ by
\begin{equation}\label{para-H1}
  p= -a_1^2,~~q= -a_2^2~,
\end{equation}
we consider
\begin{equation}\label{H1-2}
  (u-u_{12})(u_{1}-u_{2})-(a_2^2-a_1^2)=0.
\end{equation}
Introducing the transformation \cite{HZ-2009}
\begin{equation}\label{trans-H1}
  u =Z-\frac{g}{f},~~ Z=a_1n_1+a_2n_2+\gamma_0,
\end{equation}
where $\gamma_0$ is an arbitrary constant, it then follows from \eqref{H1-2} that \cite{HZ-2009}
\begin{subequations}\label{be-H1}
  \begin{align}
    &g_{2}   f_{1}-g_{1}   f_{2}+(a_1-a_2)(   f_{1}   f_{2}-f   f_{12})=0,\\
    &g   f_{12}-g_{12}f+(a_1+a_2)(f   f_{12}-   f_{1}   f_{2})=0.
  \end{align}
\end{subequations}
The Casoratian solutions to the above bilinear H1 equation \eqref{be-H1} are \cite{HZ-2009}
\begin{equation}\label{sol-Casoratian-be-H1}
  f=\bigl\lvert\widehat{N-1} \bigr\rvert_{[l_1]},~~g=\bigl\lvert\widehat{N-2},N \bigr\rvert_{[l_1]},
\end{equation}
composed by $\psi=(\psi_1, \psi_2, \cdots, \psi_N)^T$ with
\begin{align}\label{3.7}
  \psi_j(n_1,n_2,l_1;k_j) =\rho_j^+k_j^{l_1}(a_1+k_j)^{n_1}(a_2+k_j)^{n_2}
  +\rho_j^-(-k_j)^{l_1}(a_1-k_j)^{n_1}(a_2-k_j)^{n_2}.
\end{align}

To write the bilinear H1 equation \eqref{be-H1} using a single $\tau$-function,
due to the arbitrariness of $\rho_j^{\pm}$,
we can replace $\rho_j^{\pm}$ in \eqref{3.7} with $\rho_j^{\pm} e^{\pm k_j x}$
and consider the following entry in the vector $\psi$:
\begin{align}\label{psi-H1}
  \psi_j(n_1,n_2,x;k_j)= \rho_j^+e^{k_jx}k_j^{l_1}(a_1+k_j)^{n_1}(a_2+k_j)^{n_2}
  +\rho_j^-e^{-k_jx}(-k_j)^{l_1}(a_1-k_j)^{n_1}(a_2-k_j)^{n_2}.
\end{align}
Then we have the following.

\begin{lemma}\label{Lem-3-1}
For the Casoratians $f$ and $g$ defined in \eqref{sol-Casoratian-be-H1}
composed by $\psi$ with entries \eqref{psi-H1},
the following relations hold:
\begin{subequations}
\begin{align}
&f=\bigl\lvert\widehat{N-1} \bigr\rvert_{[l_1]}=\bigl\lvert\widehat{N-1} \bigr\rvert_{[x]}
=\bigl\lvert\widehat{N-1} \bigr\rvert_{[n_\sigma]},~~ (\sigma=1,2), \\
&g=\bigl\lvert\widehat{N-2},N \bigr\rvert_{[l_1]}=\bigl\lvert\widehat{N-2},N \bigr\rvert_{[x]}=f_x
=\bigl\lvert\widehat{N-2},N \bigr\rvert_{[n_\sigma]}-a_{\sigma} N f,~~  (\sigma=1,2). \label{g-relation-H1}
\end{align}
\end{subequations}
\end{lemma}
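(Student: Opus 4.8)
The plan is to reduce the entire lemma to two operator identities satisfied by each entry $\psi_j$ in \eqref{psi-H1}, after which the algebra of the Wronskian/Casoratian columns does all the work. Writing $\psi_j=A_j^++A_j^-$ for the two plane-wave factors appearing in \eqref{psi-H1}, the crucial first observation is that the forward shift $E_{l_1}$ multiplies $A_j^+$ by $k_j$ and $A_j^-$ by $-k_j$, which is exactly the effect of $\partial_x$ on $\psi_j$ (since $e^{\pm k_jx}$ contributes $\pm k_j$). Hence $E_{l_1}\psi_j=\partial_x\psi_j$, and therefore $E_{l_1}^m\psi_j=\partial_x^m\psi_j$ for every $m\geq 0$. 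Applied column by column, this single identity immediately gives $\bigl\lvert\widehat{N-1}\bigr\rvert_{[l_1]}=\bigl\lvert\widehat{N-1}\bigr\rvert_{[x]}$ and $\bigl\lvert\widehat{N-2},N\bigr\rvert_{[l_1]}=\bigl\lvert\widehat{N-2},N\bigr\rvert_{[x]}$, so the Casoratian-in-$l_1$ and Wronskian-in-$x$ representations of $f$ and $g$ coincide.

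For the relation $g=f_x$, I would invoke the standard Wronskian derivative rule: differentiating $f=\bigl\lvert 0,1,\dots,N-1\bigr\rvert_{[x]}$ with respect to $x$ produces a sum of determinants in which one column is advanced by a single $x$-derivative; every such term contains a repeated column and vanishes, except the one in which the last column $\partial_x^{N-1}\psi$ becomes $\partial_x^N\psi$, yielding $f_x=\bigl\lvert\widehat{N-2},N\bigr\rvert_{[x]}=g$.

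The second operator identity controls the $n_\sigma$-direction. A direct computation gives $E_{n_\sigma}\psi_j=(a_\sigma+\partial_x)\psi_j$, because $E_{n_\sigma}$ multiplies $A_j^+$ by $a_\sigma+k_j$ and $A_j^-$ by $a_\sigma-k_j$. Consequently the $k$-th column of $\bigl\lvert\widehat{N-1}\bigr\rvert_{[n_\sigma]}$ equals $(a_\sigma+\partial_x)^k\psi$, which in the basis $\{\partial_x^m\psi\}$ is an upper-triangular combination with leading term $\partial_x^k\psi$. Subtracting suitable multiples of earlier columns—an operation that preserves the determinant—reduces each column to $\partial_x^k\psi$, so $\bigl\lvert\widehat{N-1}\bigr\rvert_{[n_\sigma]}=f$. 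For $\bigl\lvert\widehat{N-2},N\bigr\rvert_{[n_\sigma]}$ the same reduction clears the first $N-1$ columns to $\partial_x^0\psi,\dots,\partial_x^{N-2}\psi$, and in the final column $(a_\sigma+\partial_x)^N\psi=\sum_{m=0}^N\binom{N}{m}a_\sigma^{N-m}\partial_x^m\psi$ all terms with $m\leq N-2$ are absorbed, leaving $Na_\sigma\partial_x^{N-1}\psi+\partial_x^N\psi$. Multilinearity in the last column splits the determinant into $Na_\sigma f+g$, which rearranges to the claimed $g=\bigl\lvert\widehat{N-2},N\bigr\rvert_{[n_\sigma]}-a_\sigma N f$.

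The determinant manipulations are routine; the one place demanding care is the bookkeeping of the surviving terms in the last column of the $n_\sigma$-Casoratian, where the binomial coefficient $\binom{N}{N-1}=N$ is precisely what produces the correction term $a_\sigma N f$. Everything else follows mechanically once the two operator identities $E_{l_1}\psi_j=\partial_x\psi_j$ and $E_{n_\sigma}\psi_j=(a_\sigma+\partial_x)\psi_j$ are in hand.
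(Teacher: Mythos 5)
Your proposal is correct and follows essentially the same route as the paper: both rest on the operator identities $E_{l_1}\psi_j=\partial_x\psi_j$ and $E_{n_\sigma}\psi_j=(a_\sigma+\partial_x)\psi_j$ (the paper writes the latter as $E_{l_1}\psi_j=(E_{n_\sigma}-a_\sigma)\psi_j$, which is the same relation), followed by binomial expansion and column operations on the determinants. Your write-up is slightly more explicit than the paper's, which dispatches the $g$-relation with ``similarly''; your bookkeeping of the $\binom{N}{N-1}=N$ term correctly fills in that detail.
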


\begin{proof}
From \eqref{psi-H1}  we can find that
\begin{subequations}\label{shift-relation}
      \begin{align}
        &E_{l_1}\psi_j=\partial_{x}\psi_j,~~E_{l_1}^{i}\psi_j=\partial_{x}^i\psi_j,\label{l1-x-H1}\\
        &E_{l_1}\psi_j=(E_{n_{\sigma}}-a_{\sigma})\psi_j,~~
        E_{l_1}^i\psi_j=(E_{n_{\sigma}}-a_{\sigma})^i\psi_j,~~\sigma=1,2.\label{l1-v-H1}
\end{align}
\end{subequations}
The relation \eqref{l1-x-H1} immediately gives rise to  $f=\bigl\lvert\widehat{N-1} \bigr\rvert_{[x]}$.
Using \eqref{l1-v-H1} we have
\begin{align*}
 f&=\bigl\lvert\widehat{N-1} \bigr |_{[l_1]}
 =\bigl\lvert \psi,E_{l_1}\psi,\cdots, E_{l_1}^{N-1}\psi \bigr\rvert \\
      &=\left\lvert \psi,(E_{n_{\sigma}}-a_{\sigma})\psi,\cdots,(E_{n_{\sigma}}-a_{\sigma})^{N-1}\psi \right\rvert \\
      &=\bigl\lvert \psi,E_{n_{\sigma}}\psi,\cdots, E_{n_{\sigma}}^{N-1}\psi \bigr\rvert \\
      &=\bigl\lvert \widehat{N-1} \bigr\rvert_{[n_{\sigma}]},~~~ (\sigma=1,2).
\end{align*}
Similarly, one can obtain \eqref{g-relation-H1}.

\end{proof}

This Lemma immediately indicates that the Casoratians
\begin{equation}\label{sol-Casoratian-be-H1-1}
f=\bigl\lvert\widehat{N-1} \bigr\rvert_{[\lambda]},~~g=\bigl\lvert\widehat{N-2},N \bigr\rvert_{[\lambda]},
~~(\lambda=n_1, n_2, x),
\end{equation}
composed by $\psi$ with entries \eqref{psi-H1}
are solutions to the bilinear H1 equation \eqref{be-H1}.
In addition, noting that $g=f_x$, we are led to the following.

\begin{theorem} \label{th-1}
Through the transformation
\begin{equation}\label{trans-H1-f}
  u =Z-\frac{f_x}{f}, ~~~ Z=a_1n_1+a_2n_2+\gamma_0,
\end{equation}
the H1 equation \eqref{H1-2} allows
the following bilinear form
with a single $\tau$-function $f$:
\begin{subequations}\label{be-f-H1}
    \begin{align}
      &D_x    f_{2}\cdot   f_{1}+(a_1-a_2)(   f_{1}   f_{2}-f   f_{12})=0, \label{be-f-H1a}\\
      &D_x  f\cdot   f_{12}+(a_1+a_2)(f   f_{12}-   f_{1}   f_{2})=0,  \label{be-f-H1b}
    \end{align}
\end{subequations}
and its solutions are given by
\begin{equation}\label{sol-Caso-Wrons-be-H1-2}
  f=\bigl\lvert\widehat{N-1} \bigr\rvert_{[\lambda]}, ~~(\lambda=n_1, n_2, x),
\end{equation}
composed by $\psi=(\psi_1,\psi_2,\cdots,\psi_N)^T$ with the entries $\psi_j$ given in \eqref{psi-H1}.
Note that the terms $(\pm k_j)^{l_1}$ in \eqref{psi-H1} can be absorbed into
$\rho_j^{\pm}$.
\end{theorem}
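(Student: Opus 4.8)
The plan is to obtain the single-$\tau$ bilinear form \eqref{be-f-H1} by collapsing the two-field system \eqref{be-H1} — already known from \cite{HZ-2009} to linearise \eqref{H1-2} through $u = Z - g/f$ — onto the single field $f$, using the identification $g = f_x$ furnished by Lemma \ref{Lem-3-1}. Once $g$ is eliminated in favour of $f_x$, the transformation \eqref{trans-H1} becomes verbatim \eqref{trans-H1-f}, so it suffices to show that the two equations of \eqref{be-H1} turn into \eqref{be-f-H1a} and \eqref{be-f-H1b} respectively.

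The key step is the recasting into Hirota form. Since the shift operators $E_{n_1}, E_{n_2}$ commute with $\partial_x$, the relation $g = f_x$ propagates to every shifted copy, giving $g_1 = (f_1)_x$, $g_2 = (f_2)_x$, and $g_{12} = (f_{12})_x$. Substituting these into the first equation of \eqref{be-H1} yields $(f_2)_x f_1 - (f_1)_x f_2 + (a_1-a_2)(f_1 f_2 - f f_{12}) = 0$, and into the second yields $f_x f_{12} - (f_{12})_x f + (a_1+a_2)(f f_{12} - f_1 f_2) = 0$. Expanding the definition \eqref{*} gives the elementary identity $D_x A \cdot B = A_x B - A B_x$; applying it with $(A,B) = (f_2, f_1)$ and $(A,B) = (f, f_{12})$ identifies the first two terms of each equation with $D_x f_2 \cdot f_1$ and $D_x f \cdot f_{12}$, producing precisely \eqref{be-f-H1a} and \eqref{be-f-H1b}. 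This establishes the equivalence of \eqref{be-f-H1} with \eqref{be-H1} on the locus $g = f_x$, and hence, via \cite{HZ-2009}, with \eqref{H1-2} under \eqref{trans-H1-f}.

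For the solutions I would simply invoke Lemma \ref{Lem-3-1} together with the discussion following it: the Casoratians \eqref{sol-Casoratian-be-H1-1} built from \eqref{psi-H1} solve \eqref{be-H1} with $g = f_x$, so the single $\tau$-function $f = \lvert\widehat{N-1}\rvert_{[\lambda]}$, $\lambda = n_1, n_2, x$, solves \eqref{be-f-H1}. The closing remark — that the factors $(\pm k_j)^{l_1}$ may be absorbed into $\rho_j^\pm$ — is immediate: once $f$ is read as an $[x]$-Wronskian or an $[n_\sigma]$-Casoratian the index $l_1$ no longer labels the determinant and is frozen at a constant value, so $k_j^{l_1}$ and $(-k_j)^{l_1}$ are mere constants that can be merged term-by-term into the arbitrary coefficients $\rho_j^+$ and $\rho_j^-$ in \eqref{psi-H1}.

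The computation is entirely mechanical, so there is no serious analytic obstacle; the substantive content has already been carried by Lemma \ref{Lem-3-1}, which produces the crucial collapse $g = f_x$, and by the pre-existing bilinearisation \eqref{be-H1}. The only points requiring care are bookkeeping ones: confirming that $\partial_x$ commutes with the lattice shifts, so that $g = f_x$ transfers to all shifted entries, and checking the sign convention in $D_x A \cdot B = A_x B - A B_x$ against \eqref{*}. Both I would verify explicitly before declaring the two equations of \eqref{be-H1} and \eqref{be-f-H1} to coincide.
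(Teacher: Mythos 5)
Your proposal is correct and follows essentially the same route as the paper: the paper likewise derives Theorem \ref{th-1} by combining the known bilinearisation \eqref{be-H1} of \cite{HZ-2009} with the identification $g=f_x$ from Lemma \ref{Lem-3-1}, so that substitution and the identity $D_x A\cdot B = A_x B - A B_x$ yield \eqref{be-f-H1a}--\eqref{be-f-H1b}, with the Casoratian/Wronskian solutions carrying over. Your added bookkeeping (commutation of $E_{n_1},E_{n_2}$ with $\partial_x$, the sign check against \eqref{*}, and the frozen-$l_1$ justification for absorbing $(\pm k_j)^{l_1}$ into $\rho_j^\pm$) only makes explicit what the paper leaves implicit.
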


\subsection{H2}\label{sec-3-2}

H2 equation \eqref{H2} with the parametrization \eqref{para-H1} is
\begin{equation}\label{H2-1}
  (u-u_{12})(u_{1}-u_{2})+(a_1^2-a_2^2)(u+u_{1}+u_{2}+u_{12}-a_1^2-a_2^2)=0.
\end{equation}
This equation allows a bilinearization with 5 equations (see equation (4.21) in \cite{HZ-2009}),
while in the following we provide a simpler form.
By the transformation
\begin{equation}
  u=Z^2-2Z\frac{g}{f}+\frac{h}{f},~~
  Z=a_1n_1+a_2n_2+\gamma_0,
\end{equation}
where $\gamma_0$ is a constant,
H2 equation \eqref{H2-1} is bilinearized as\footnote{Note that this bilinear form
has been obtained in  \cite{CMZ-2021} using the connection between H1 and H2.
However, in this paper we find a direct formulation of the bilinearization, see \eqref{3.17}.}
\begin{subequations}\label{be-f-g-h-H2}
  \begin{align}
  &\mathcal{H}_1\equiv2a_1(   f_{1}g-fg_{1})-2gg_{1}+ h_1f+h   f_{1}=0,\\
  &\mathcal{H}_2\equiv2a_2(   f_{2}g-fg_{2})-2gg_{2}+ h_2f+h   f_{2}=0,
  \end{align}
\end{subequations}
which are connected with \eqref{H2-1} as
\begin{equation}\label{3.17}
  \mathrm{H}_2=P_1 \mathcal{H}_1+P_2 \mathcal{H}_2+P_3 \mathcal{\h H}_1
  +P_4 \mathcal{\t H}_2+P_5 \mathcal{H}_1\mathcal{\h H}_1+P_6 \mathcal{H}_2\mathcal{\t H}_2
\end{equation}
with
\begin{align*}
  &P_1=\frac{2}{f f_1}\left(Z+a_2-\frac{g_{2}}
  {   f_{2}}\right)\left(Z+a_1+a_2-\frac{g_{12}}{   f_{12}}\right),\\
&P_2=\frac{-2}{f f_2}\left(Z+a_1-\frac{g_{1}}
{   f_{1}}\right)\left(Z+a_1+a_2-\frac{g_{12}}{   f_{12}}\right),\\
&P_3=\frac{2}{f f_1}\left(Z-\frac{g}{f}\right)\left(Z+a_1-\frac{g_{1}}{f_{1}}\right),\\
&P_4=\frac{-2}{f f_2}\left(Z-\frac{g}{ f}\right)\left(Z+a_2-\frac{g_{2}}
{   f_{2}}\right),\\
&P_5=\frac{1}{f   f_{1}   f_{2}   f_{12}}, ~~~
P_6=\frac{-1}{f   f_{1}   f_{2}   f_{12}},
\end{align*}
where the $\mathcal{\h H}_1$ (or $\mathcal{\t H}_2$) denotes that a $n_2$-direction (or $n_1$-direction)
forward shift on the whole expression $\mathcal{H}_1$ (or $\mathcal{H}_2$).

As for solutions to \eqref{be-f-g-h-H2}, we have the following (see \cite{CMZ-2021}).

\begin{lemma}\label{Lem-3-2}
The bilinear H2 equation \eqref{be-f-g-h-H2} admits Casoratian solutions
\begin{align}\label{f-g-h-be-H2}
f=\bigl\lvert\widehat{N-1}\bigr\rvert_{[l_1]}, ~~g=\bigl\lvert\widehat{N-2},N\bigr\rvert_{[l_1]}, ~~
h=\bigl\lvert\widehat{N-3},N-1,N\bigr\rvert_{[l_1]}+\bigl\lvert\widehat{N-2},N+1\bigr\rvert_{[l_1]},
\end{align}
composed by $\psi=(\psi_1,\psi_2\cdots,\psi_N)^T$ with entries $\psi_j$ given in \eqref{3.7}.
\end{lemma}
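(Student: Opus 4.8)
The plan is to verify \eqref{be-f-g-h-H2} by reducing it, via the substitution $g=f_x$, $h=f_{xx}$, to a single compact Hirota identity and then checking that identity by the Wronskian (Freeman--Nimmo) technique.

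First I would pass to a Wronskian picture. Replacing $\rho_j^\pm$ in \eqref{3.7} by $\rho_j^\pm e^{\pm k_j x}$ turns the entries into \eqref{psi-H1}; since \eqref{be-f-g-h-H2} contains no $l_1$- or $x$-shift and, by the argument of Lemma~\ref{Lem-3-1} [cf.\ \eqref{l1-x-H1}], the $l_1$-Casoratians in \eqref{f-g-h-be-H2} coincide with the corresponding $x$-Wronskians, it suffices to verify \eqref{be-f-g-h-H2} in the $x$-Wronskian picture. There \eqref{f-g-h-be-H2} reads $f=\bigl\lvert\widehat{N-1}\bigr\rvert_{[x]}$, $g=f_x$ and $h=f_{xx}$ (the combination defining $h$ is exactly $\partial_x\bigl\lvert\widehat{N-2},N\bigr\rvert_{[x]}=\partial_x g$). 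Two further facts are recorded from \eqref{psi-H1}: the shift relation $E_{n_\sigma}\psi_j=(\partial_x+a_\sigma)\psi_j$, so that $f_\sigma=E_{n_\sigma}f$ is the determinant whose $i$-th column is $(\partial_x+a_\sigma)\partial_x^{\,i}\psi$, and the condition equation $\partial_x^2\psi_j=k_j^2\psi_j$.

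Next I would substitute $g=f_x$, $h=f_{xx}$ into $\mathcal H_1$. A short computation with the definition \eqref{*} of $D_x$ gives $f_1g-fg_1=D_xf\cdot f_1$ and $h_1f+hf_1=D_x^2f\cdot f_1+2f_xf_{1x}$, so the term $-2gg_1=-2f_xf_{1x}$ cancels and one is left with the compact form $\mathcal H_1=(D_x^2+2a_1D_x)\,f\cdot f_1$, and likewise $\mathcal H_2=(D_x^2+2a_2D_x)\,f\cdot f_2$. Thus the whole lemma reduces to the single Wronskian identity $(D_x^2+2a_\sigma D_x)\,f\cdot f_\sigma=0$ for $\sigma=1,2$, and by the symmetry $1\leftrightarrow 2$ it is enough to treat $\sigma=1$.

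Finally I would establish $(D_x^2+2a_1D_x)f\cdot f_1=0$ by expanding the two bilinear derivatives into sums of determinants, inserting the column description of $f_1$ from the first step, and using the condition equation $\partial_x^2\psi_j=k_j^2\psi_j$ repeatedly to lower every column of order $\ge N$ until the surviving minors close up under a three-term Pl\"ucker relation. I expect this last step to be the main obstacle: $f_1$ is not an ordinary Wronskian but a Darboux-type determinant whose columns mix the orders $i$ and $i+1$, so the bookkeeping needed to show that all terms generated by $D_x^2$, $2a_1D_x$ and the shift in $f_1$ organize into exactly one vanishing Pl\"ucker identity (with no leftover) is delicate. The compact form $(D_x^2+2a_1D_x)f\cdot f_1$ also clarifies why only two bilinear equations are needed and why the construction parallels the H1 case treated in Theorem~\ref{th-1}.
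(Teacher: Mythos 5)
Your first two steps are correct, and they are in substance the same link the paper itself draws between this lemma and Theorem \ref{th-2}, only run in the opposite direction: replacing $\rho_j^{\pm}$ by $\rho_j^{\pm}e^{\pm k_jx}$ is harmless (at $x=0$ one recovers the entries \eqref{3.7}, and the $\rho_j^{\pm}$ are arbitrary anyway), the identifications $g=f_x$, $h=f_{xx}$ follow from $E_{l_1}\psi_j=\partial_x\psi_j$ as in Lemma \ref{Lem-3-1}, and your computation $\mathcal H_\sigma=(D_x^2+2a_\sigma D_x)f\cdot f_\sigma$ under this substitution is exactly how the paper passes from Lemma \ref{Lem-3-2} to the bilinear form \eqref{be-f-H2}. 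For calibration, note that the paper does not prove Lemma \ref{Lem-3-2} at all: it is quoted from \cite{CMZ-2021}, where the bilinear form and its Casoratian solutions were obtained through the connection between H1 and H2. So a completed direct verification along your lines would be a genuinely different (and self-contained) argument.

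The genuine gap is your third step, which carries the entire mathematical content of the lemma: the identity $(D_x^2+2a_\sigma D_x)f\cdot f_\sigma=0$ for $f=\bigl\lvert\widehat{N-1}\bigr\rvert_{[x]}$ is asserted and flagged by you as ``the main obstacle,'' but never established. Moreover, the mechanism you anticipate --- that everything closes up under ``exactly one vanishing Pl\"ucker identity'' --- cannot work as stated. Since $E_{n_\sigma}\psi=(a_\sigma+\partial_x)\psi$, the multilinear expansion of $f_\sigma$ produces a sum of $N+1$ distinct minors, $f_\sigma=\sum_{m=0}^{N}a_\sigma^{N-m}\bigl\lvert 0,\cdots,N-m-1,N-m+1,\cdots,N\bigr\rvert_{[x]}$, so the products occurring in $(D_x^2+2a_\sigma D_x)f\cdot f_\sigma$ cannot be absorbed into a single three-term Laplace relation. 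The standard way to finish (as in \cite{HZ-2009,CMZ-2021}, or in the spirit of the computations in Appendix \ref{App-A}) keeps $f_\sigma$ intact as a determinant with modified columns, derives expressions for $f_x$, $f_{xx}$, $f_\sigma$, $f_{\sigma,x}$, $f_{\sigma,xx}$ in terms of minors such as $\bigl\lvert\widehat{N-2},N\bigr\rvert$, $\bigl\lvert\widehat{N-3},N-1,N\bigr\rvert$, $\bigl\lvert\widehat{N-2},N+1\bigr\rvert$, and exploits the condition $\partial_x^2\psi_j=k_j^2\psi_j$ through trace-type identities, e.g.\ $\bigl\lvert\widehat{N-2},N+1\bigr\rvert-\bigl\lvert\widehat{N-3},N-1,N\bigr\rvert=\bigl(\sum_{j}k_j^2\bigr)f$, before any Laplace-expansion argument is applied; several such relations, not one, are needed. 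Until this machinery is actually carried out, your proposal only reduces the lemma to an unproven, nontrivial Wronskian identity rather than proving it.
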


Similar to the treatment to H1, we introduce $\psi_j$ defined by \eqref{psi-H1}.
Then, it follows that the relations $g=f_x$ and $h=f_{xx}$ hold and we obtain the following Theorem.

\begin{theorem}\label{th-2}
By the transformation
\begin{equation}
  u=Z^2-2Z\frac{f_x}{f}+\frac{f_{xx}}{f},
  ~~ Z=a_1n_1+a_2n_2+\gamma_0,
\end{equation}
H2 equation \eqref{H2} allows the following bilinear form
\begin{subequations}\label{be-f-H2}
    \begin{align}
      &(D_x^2+2a_1D_x) f\cdot    f_{1}=0,\\
      &(D_x^2+2a_2D_x) f\cdot    f_{2}=0,
    \end{align}
  \end{subequations}
which has a solution
\begin{equation}\label{sol-Caso-Wrons-be-H2}
    f=\bigl\lvert\widehat{N-1} \bigr\rvert_{[\lambda]}, ~~(\lambda=n_1, n_2, x)
\end{equation}
composed by $\psi=(\psi_1,\psi_2,\cdots,\psi_N)^T$ with the entries $\psi_j$  given in \eqref{psi-H1}.
\end{theorem}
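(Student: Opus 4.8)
The plan is to follow the same route as for H1 in Theorem~\ref{th-1}: I start from the three-function bilinear system \eqref{be-f-g-h-H2}, whose Casoratian solutions are supplied by Lemma~\ref{Lem-3-2}, then specialize the entries to \eqref{psi-H1} so that $g$ and $h$ collapse onto $x$-derivatives of the single $\tau$-function $f$, and finally rewrite \eqref{be-f-g-h-H2} as the Hirota form \eqref{be-f-H2}. The key input is again the shift relation \eqref{l1-x-H1}, $E_{l_1}\psi_j=\partial_x\psi_j$, which is valid for the entries \eqref{psi-H1}: it identifies the Casoratian in $l_1$ with the Wronskian in $x$ column by column, so that $f=\bigl\lvert\widehat{N-1}\bigr\rvert_{[l_1]}=\bigl\lvert\widehat{N-1}\bigr\rvert_{[x]}$, and likewise for the determinants defining $g$ and $h$ in \eqref{f-g-h-be-H2}. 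Combined with \eqref{l1-v-H1} this simultaneously produces the three equivalent representations in \eqref{sol-Caso-Wrons-be-H2}, exactly as in the proof of Lemma~\ref{Lem-3-1}.

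The next step is to verify the identifications $g=f_x$ and $h=f_{xx}$ using the standard derivative rule for a Wronskian. Writing $f=\bigl\lvert 0,1,\dots,N-1\bigr\rvert_{[x]}$ with columns $\partial_x^i\psi$, differentiation in $x$ annihilates every term in which an interior column is differentiated, since a repeated column appears; only differentiation of the last column survives, giving $f_x=\bigl\lvert\widehat{N-2},N\bigr\rvert_{[x]}=g$. Differentiating once more, the two surviving contributions come from acting on the column $\partial_x^{N-2}\psi$ and on the last column $\partial_x^{N}\psi$, producing $f_{xx}=\bigl\lvert\widehat{N-3},N-1,N\bigr\rvert_{[x]}+\bigl\lvert\widehat{N-2},N+1\bigr\rvert_{[x]}$, which is precisely $h$ in \eqref{f-g-h-be-H2}. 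Since the discrete shift $E_{n_\sigma}$ commutes with $\partial_x$, the shifted versions $g_\sigma=(f_\sigma)_x$ and $h_\sigma=(f_\sigma)_{xx}$ follow immediately.

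The final step is a direct substitution. Replacing $g,h$ and their shifts in $\mathcal{H}_\sigma$ turns it into $f_{xx}f_\sigma-2f_x(f_\sigma)_x+f(f_\sigma)_{xx}+2a_\sigma(f_xf_\sigma-f(f_\sigma)_x)$, and reading off the definition \eqref{*} of Hirota's operator this is exactly $(D_x^2+2a_\sigma D_x)f\cdot f_\sigma$ for $\sigma=1,2$. Because the identity \eqref{3.17} expresses H2 in the $u$-variable as a combination of $\mathcal{H}_1,\mathcal{H}_2$ and their shifts, this substitution shows that \eqref{be-f-H2} bilinearizes \eqref{H2-1} under the stated transformation, which is simply the one preceding Lemma~\ref{Lem-3-2} with $g,h$ replaced by $f_x,f_{xx}$. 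At the same time, Lemma~\ref{Lem-3-2} read with the entries \eqref{psi-H1} guarantees that $f$ in the Casoratian/Wronskian forms \eqref{sol-Caso-Wrons-be-H2} satisfies \eqref{be-f-H2}.

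I expect the only delicate point to be the two-term computation of $f_{xx}=h$: one must confirm that no further determinants survive the second differentiation and that the two nonzero ones match exactly the two Casoratians comprising $h$ in \eqref{f-g-h-be-H2}. Everything else is routine bookkeeping, once the identity $E_{l_1}=\partial_x$ on the entries is used to pass freely between the Casoratian and Wronskian pictures and the algebraic identity \eqref{3.17} is taken as given.
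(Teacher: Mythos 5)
Your proposal is correct and follows essentially the same route as the paper: the paper also passes from the three-function system \eqref{be-f-g-h-H2} with Lemma \ref{Lem-3-2} to the single-$\tau$ form by replacing the entries \eqref{3.7} with \eqref{psi-H1} (exploiting the arbitrariness of $\rho_j^{\pm}$), identifying $g=f_x$ and $h=f_{xx}$ exactly as in the H1 treatment, and then reading off \eqref{be-f-H2} together with the connection \eqref{3.17}. Your write-up merely makes explicit the Wronskian differentiation bookkeeping and the $D$-operator identification that the paper leaves implicit, so there is no gap.
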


\subsection{H3}\label{sec-3-3}

For H3 equation \eqref{H3}, i.e.,
\begin{equation}\label{H3-1}
  p(uu_{1}+u_{2}u_{12})-q(uu_{2}+u_{1}u_{12})-\delta(q^2-p^2)=0,
\end{equation}
it allows two bilinear forms, both of which consist of a single $\tau$-function \cite{HZ-2009}.
With the parametrisation \cite{HZ-2009}
\begin{equation}
  p^2=\frac{r^2a_3^2}{a_3^2-a_1^2},~~q^2=\frac{r^2a_3^2}{a_3^2-a_2^2},~~
  \alpha^2=-\frac{a_1-a_3}{a_1+a_3},~~\beta^2=-\frac{a_2-a_3}{a_2+a_3},~~c_1c_2=-\frac{1}{4}r\delta
\end{equation}
and the transformation \cite{HZ-2009}
\begin{equation}
  u =c_1\alpha^{n_1}\beta^{n_2}\frac{   f_{3}}{f}+c_2\alpha^{-n_1}\beta^{-n_2}\frac{f^{3}}{f}~,
\end{equation}
H3 can be bilinearized as  either \cite{HZ-2009}
\begin{subequations}\label{be-H3-f-1}
  \begin{align}
    &(a_2+a_3)   f_{12}   f_{3}+(a_1-a_3)ff_{123}-(a_1+a_2)   f_{1}   f_{23}=0,\label{be-H3-1}\\
    &(a_3-a_2)   f_{12}f^{3}-(a_1+a_3)ff_{12}^{3}+(a_1+a_2)   f_{1}f_2^{3}=0,\label{be-H3-2}\\
    &2a_3(a_1-a_2)f   f_{12}+(a_3-a_1)(a_2+a_3)   f_{13}f_2^{3}+(a_1+a_3)(a_2-a_3)   f_{23}f_{1}^{3}=0,
    \label{be-H3-3}
  \end{align}
\end{subequations}
or
\begin{subequations}\label{be-H3-f-2}
  \begin{align}
&2a_3f    f_{1}+(a_1-a_3)   f_{13}f^{3}-(a_1+a_3)   f_{3}f_1^{3}=0,\label{be-H3-1-1}\\
&2a_3f    f_{2}+(a_2-a_3)   f_{23}f^{3}-(a_2+a_3)   f_{3}f_2^{3}=0.\label{be-H3-1-2}
  \end{align}
\end{subequations}
Both forms have a solution
\begin{equation}\label{f-H3}
  f=\bigl\lvert\widehat{N-1}\bigr\rvert_{[\lambda]},~~ (\lambda=n_1, n_2, n_3)
\end{equation}
composed by $\psi=(\psi_1,\psi_2,\cdots,\psi_N)^T$ with entries
\begin{equation}
\label{psi-H3}
  \psi_j(n_1,n_2,n_3;k_j)=\rho_j^+\prod^3_{i=1}(a_i+k_j)^{n_i}+\rho_j^{-}\prod^3_{i=1}(a_i-k_j)^{n_i}.
\end{equation}

\subsection{Q1}\label{sec-3-4}

\subsubsection{Q1 with power background}\label{sec-3-4-1}

For Q1 equation \eqref{Q1}, i.e.,
\begin{equation}\label{Q1-1}
   p(u-u_{2})(u_{1}-u_{12})-q(u-u_{12})(u_{2}-u_{12})-\delta^2 p q (q-p)=0,
  \end{equation}
after the parametrization \cite{HZ-2009}
  \begin{equation}
    p=\frac{ra_3^2}{a_1^2-a_3^2},~~q=\frac{ra_3^2}{a_2^2-a_3^2},~~\alpha=\frac{a_1-a_3}{a_1+a_3},
    ~~\beta=\frac{a_2-a_3}{a_2+a_3},~~c_1c_2=\frac{1}{16}r^2\delta^2
  \end{equation}
and through the transformation
\begin{equation}
  u=c_1\alpha^{n_1}\beta^{n_2}\frac{  f_{33}}{f}+c_2\alpha^{-n_1}\beta^{-n_2}\frac{f^{33}}{f}~,
\end{equation}
it can be bilinearized as \cite{HZ-2009}
\begin{subequations}\label{be-f-Q1-1}
  \begin{align}
&2a_3f    f_{1}+(a_1-a_3)   f_{13}f^{3}-(a_1+a_3)   f_{3}f_1^{3}=0,\label{be-f1-Q1-1}\\
&2a_3f    f_{2}+(a_2-a_3)   f_{23}f^{3}-(a_2+a_3)   f_{3}f_2^{3}=0,\label{be-f1-Q1-2}
  \end{align}
\end{subequations}
which is nothing but \eqref{be-H3-f-2} and has a solution $f$ given in \eqref{f-H3}
composed by $\psi$ with entries \eqref{psi-H3}.

\subsubsection{Q1 with linear background}\label{sec-3-4-2}

Q1 equation \eqref{Q1-1} can have a linear background solution
$u =\alpha n_1+\beta n_2+\gamma_0$
under the parametrization \cite{HZ-2009}
\begin{equation}\label{3.33}
  p=\frac{c^2/r-\delta^2r}{a_1^2-\delta^2},~~q=\frac{c^2/r-\delta^2r}{a_2^2-\delta^2},
  ~~\alpha=pa_1,~~\beta=qa_2,
\end{equation}
where $c,r$ and $\gamma_0$ are all constants.
Through the transformation
\begin{equation}\label{trans-Q1-2}
  u =\alpha n_1+\beta n_2+\gamma_0-(c^2/r-\delta^2r)\frac{g}{f},
\end{equation}
there exists the following bilinearization \cite{HZ-2009}
\begin{subequations} \label{be-f-g-Q1}
\begin{align}
&\mathcal{Q}_1^{(1)}\equiv (a_2-\delta)f_{123}f+(a_1+\delta)   f_{12}   f_{3}-(a_1+a_2)   f_{13}   f_{2}=0,\\
&\mathcal{Q}_2^{(1)}\equiv(a_1-a_2)f_{123} f+(a_2+\delta)   f_{13}   f_{2}-(a_1+\delta)   f_{23}   f_{1}=0,\\
&\mathcal{Q}_3^{(1)}\equiv   f_{1}   f_{23}-   f_{13}   f_{2}+(\delta-a_1)
 f_{13}g_{2}+(a_2-\delta)   f_{23}g_{1}+(a_1-a_2)   f_{3}g_{12}=0,\\
&\mathcal{Q}_4^{(1)}\equiv   f_{23}   f_{1}-   f_{13}   f_{2}-(a_1+\delta)
f_{23}g_{1}+(a_2+\delta)   f_{13}g_{2}+(a_1-a_2)f_{123}g=0,
\end{align}
\end{subequations}
with solutions
\begin{equation}\label{sol-f-g-Q1-1}
  f=\bigl\lvert \widehat{N-1} \bigr\rvert_{[n_3]}, ~~g=\bigl\lvert -1,\widetilde{N-1} \bigr\rvert_{[n_3]},
\end{equation}
composed by $\psi$ with entries
\begin{equation}
  \psi_j(n_1,n_2,n_3;k_j)=\rho_j^+(a_1+k_j)^{n_1}(a_2+k_j)^{n_2}(\delta+k_j)^{n_3}
  +\rho_j^-(a_1-k_j)^{n_1}(a_2-k_j)^{n_2}(\delta-k_j)^{n_3}.
\end{equation}

In order to express the bilinear form using a single $\tau$-function,
we first introduce
\begin{equation}
\label{psi-Q1-2}
  \psi_j(n_1,n_2,n_3,y;k_j)=\,\rho_j^+e^{\frac{y}{a_3+k_j}}\prod^3_{i=1}(a_i+k_j)^{n_i}
   +\rho_j^-e^{\frac{y}{a_3-k_j}}\prod^3_{i=1}(a_i-k_j)^{n_i}
\end{equation}
with $a_3=\delta$.
Thus we have $g=f_y$
and then we can rewrite $\mathcal{Q}_3^{(1)}$ and $\mathcal{Q}_4^{(1)}$ as
\begin{subequations} \label{f-Q1-1}
  \begin{align}
  &\mathcal{Q}_3^{(2)}\equiv-   f_{13}   f_{2}+   f_{1}   f_{23}+(a_3-a_1)   f_{13}   f_{2,y}
  +(a_2-a_3)   f_{23}   f_{1,y}+(a_1-a_2)   f_{3}   f_{12,y}=0,\label{be-f-Q1-1-3}\\
  &\mathcal{Q}_4^{(2)}\equiv   f_{23}   f_{1}-   f_{13}   f_{2}-(a_1+a_3)   f_{23}   f_{1,y}
  +(a_2+a_3)   f_{13}   f_{2,y}+(a_1-a_2)f_{123}f_y=0\label{be-f-Q1-1-4}.
  \end{align}
  \end{subequations}
Note that these two equations are not ready to be written
in terms of  Hirota's $D$-operator.
However, using the known bilinear equations (see \eqref{3D-dAKP-sec2})
\begin{equation}\label{3D-dAKP-sec2-f}
\mathcal{Q}_0\equiv A \equiv  (a_1-a_2)f_3 f_{12}+(a_2-a_3) f_1f_{23}+(a_3-a_1) f_2 f_{13}=0,
\end{equation}
we have the following relations
  \begin{equation}
  \mathcal{Q}_3=  2 \mathcal{Q}_3^{(2)}-\mathcal{Q}_{0,y}=0,~~
  \mathcal{Q}_4=2 \mathcal{Q}_4^{(2)}-\mathcal{Q}^{(1)}_{2,y}=0,
  \end{equation}
which give rise to the expected form
\begin{subequations}
  \begin{align}
   &\!2   f_{1}   f_{23}-\!2   f_{13}   f_{2}\!+\!(a_3-a_1)D_y   f_{2}\cdot   f_{13}\!
   +\!(a_2-a_3)D_y   f_{1}\cdot   f_{23}\!-\!(a_1-a_2)D_y   f_{3}\cdot   f_{12}\!=0,\\
  &2   f_{23}   f_{1}-2   f_{13}   f_{2}\!-\!(a_1+a_3)D_y   f_{1}\cdot   f_{23}\!
  +\!(a_2+a_3)D_y   f_{2}\cdot   f_{13}\!+\!(a_1-a_2)D_yf\cdot f_{123}=0.
  \end{align}
  \end{subequations}
Thus we arrive at the following.

\begin{theorem}\label{th-3}
For Q1 equation \eqref{Q1-1} with parametrization \eqref{3.33}, through the transformation
\begin{equation}\label{trans-Q1-2y}
  u =\alpha n_1+\beta n_2+\gamma_0-(c^2/r-\delta^2r)\frac{f_y}{f},
\end{equation}
it has bilinear equations (with $a_3=\delta$)
\begin{subequations} \label{be-f-Q1-2}
    \begin{align}
    & \mathcal{Q}_0\equiv  (a_1-a_2)f_3 f_{12}+(a_2-a_3) f_1f_{23}
    +(a_3-a_1) f_2 f_{13}=0, \\
    &\mathcal{Q}_1\equiv(a_2-a_3)f_{123}f+(a_1+a_3)   f_{12}   f_{3}-(a_1+a_2)   f_{13}
      f_{2}=0,\label{be-Q1-1}\\
    &\mathcal{Q}_2\equiv(a_1-a_2)f_{123} f+(a_2+a_3)   f_{13}   f_{2}-(a_1+a_3)   f_{23}
      f_{1}=0,\label{be-Q1-2}\\
    &\mathcal{Q}_3\equiv2   f_{1}   f_{23}\!-\!2   f_{13}   f_{2}\!+\!(a_3\!-\!a_1)D_y   f_{2}\!\cdot \!
    f_{13}\!+\!(a_2\!-\!a_3)D_y   f_{1}\!\cdot \!  f_{23}\!-\!(a_1\!-\!a_2)D_y   f_{3}\!\cdot \!
    f_{12}=0,\label{be-Q1-3}\\
    &\mathcal{Q}_4\equiv 2   f_{23}   f_{1}\!-\!2   f_{13}   f_{2}\!-\!(a_1\!+\!a_3)D_y   f_{1}\!\cdot \!  f_{23}\!
    +\!(a_2\!+\!a_3)D_y   f_{2}\!\cdot \!  f_{13}\!+\!(a_1\!-\!a_2)D_yf\!\cdot\! f_{123}=0\label{be-Q1-4},
    \end{align}
    \end{subequations}
which allow a solution
\begin{equation}\label{sol-f-Q1-2}
      f=\bigl\lvert \widehat{N-1} \bigr\rvert_{[\lambda]},~~ (\lambda=n_1, n_2 ,n_3),
\end{equation}
composed by $\psi=(\psi_1,\psi_2,\cdots,\psi^N)^T$ with  entries \eqref{psi-Q1-2}.
\end{theorem}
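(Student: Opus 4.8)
The plan is to build the single-$\tau$ bilinearization directly on top of the two-field bilinearization \eqref{be-f-g-Q1} and its Casoratian solution \eqref{sol-f-g-Q1-1}, specializing the plane-wave entries to \eqref{psi-Q1-2} with $a_3=\delta$. The central point is that this particular choice of exponential factors forces the auxiliary field $g$ to coincide with $f_y$, so that the four equations of \eqref{be-f-g-Q1} collapse onto a single $\tau$-function once the dAKP equation \eqref{3D-dAKP-sec2-f} is also brought in. I would therefore prove two things: that the system \eqref{be-f-Q1-2} is a bilinearization of Q1 through \eqref{trans-Q1-2y}, and that the Casoratian \eqref{sol-f-Q1-2} solves it.

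First I would establish the enabling relation $g=f_y$. For the entries \eqref{psi-Q1-2} one checks that $\partial_y\psi_j=E_{n_3}^{-1}\psi_j$, since differentiating in $y$ and shifting down in $n_3$ both multiply the two plane-wave components by $(a_3\pm k_j)^{-1}$ (recall $a_3=\delta$). Expanding $f=\bigl\lvert\widehat{N-1}\bigr\rvert_{[n_3]}$ column by column under $\partial_y$, the column at shift $i$ is sent to the column at shift $i-1$; for $i\ge 1$ this reproduces a column already present and the corresponding determinant vanishes, while the $i=0$ column produces the new column at shift $-1$. Hence $f_y=\bigl\lvert -1,\widetilde{N-1}\bigr\rvert_{[n_3]}=g$, exactly the $g$ of \eqref{sol-f-g-Q1-1}, which is what licenses the transformation \eqref{trans-Q1-2y}.

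Next I would recast the four equations of \eqref{be-f-g-Q1}. With $a_3=\delta$, the $g$-free equations $\mathcal{Q}_1^{(1)}$ and $\mathcal{Q}_2^{(1)}$ already coincide verbatim with $\mathcal{Q}_1$ and $\mathcal{Q}_2$ in \eqref{be-f-Q1-2}. Substituting $g=f_y$ into $\mathcal{Q}_3^{(1)}$ and $\mathcal{Q}_4^{(1)}$ yields the $\mathcal{Q}_3^{(2)},\mathcal{Q}_4^{(2)}$ of \eqref{f-Q1-1}, which are not yet of $D_y$-type because each carries bare single-derivative terms such as $f_{13}f_{2,y}$. I would then invoke the dAKP equation $\mathcal{Q}_0=0$ and form $\mathcal{Q}_3=2\mathcal{Q}_3^{(2)}-\mathcal{Q}_{0,y}$ and $\mathcal{Q}_4=2\mathcal{Q}_4^{(2)}-\mathcal{Q}_{2,y}^{(1)}$. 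A short bookkeeping of coefficients shows that the stray terms assemble into Hirota differences: the $(a_1-a_2),(a_2-a_3),(a_3-a_1)$ triple in $\mathcal{Q}_3$ and the $(a_1-a_2),(a_2+a_3),(a_1+a_3)$ triple in $\mathcal{Q}_4$ each collapse to a single $D_y$-term with the signs displayed in \eqref{be-Q1-3}--\eqref{be-Q1-4}. Reading these combinations backwards, any $f$ satisfying \eqref{be-f-Q1-2} gives, via $g=f_y$ (and differentiating the identities $\mathcal{Q}_0=\mathcal{Q}_2=0$ in $y$), a solution of \eqref{be-f-g-Q1}, hence through \eqref{trans-Q1-2y} a solution of Q1.

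Finally I would verify the Casoratian \eqref{sol-f-Q1-2}. The equations $\mathcal{Q}_1=\mathcal{Q}_2=0$ and $\mathcal{Q}_3^{(2)}=\mathcal{Q}_4^{(2)}=0$ hold because \eqref{sol-f-g-Q1-1} solves \eqref{be-f-g-Q1} and $g=f_y$; since these identities hold for all $y$, their $y$-derivatives vanish and the combinations give $\mathcal{Q}_3=\mathcal{Q}_4=0$. The one genuinely separate check is $\mathcal{Q}_0=0$, i.e.\ that $f=\bigl\lvert\widehat{N-1}\bigr\rvert_{[n_3]}$ satisfies the dAKP equation for the three-direction entries \eqref{psi-Q1-2}. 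I expect this to be the main obstacle, in that it is the only step requiring real determinant manipulation (Laplace expansion together with a Plücker relation on the shifted columns) rather than coefficient bookkeeping; once it is in hand, assembling the pieces above yields the theorem.
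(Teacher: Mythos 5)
Your proposal is correct and follows essentially the same route as the paper: establish $g=f_y$ from the entries \eqref{psi-Q1-2} via $\partial_y\psi_j=E_{n_3}^{-1}\psi_j$, note that $\mathcal{Q}_1^{(1)},\mathcal{Q}_2^{(1)}$ become $\mathcal{Q}_1,\mathcal{Q}_2$ at $a_3=\delta$, and pass to the $D_y$ forms through exactly the combinations $\mathcal{Q}_3=2\mathcal{Q}_3^{(2)}-\mathcal{Q}_{0,y}$ and $\mathcal{Q}_4=2\mathcal{Q}_4^{(2)}-\mathcal{Q}^{(1)}_{2,y}$, with the Casoratian solution inherited from \eqref{be-f-g-Q1}. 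The only divergence is that you flag $\mathcal{Q}_0=0$ as the hard remaining determinant check, whereas the paper simply invokes it as the known Casoratian solution of the dAKP equation \eqref{3D-dAKP-sec2}, since the factors $e^{y/(a_3\pm k_j)}$ can be absorbed into $\rho_j^{\pm}$.
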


\subsection{Q3}\label{sec-3-5}

For Q3 equation \eqref{Q3}, via the parametrization
\begin{equation}
  p^{2}=\frac{a_1^{2}-a_4^{2}}{a_1^{2}-a_3^{2}}, \quad
  q^{2}=\frac{a_2^{2}-a_4^{2}}{a_2^{2}-a_3^{2}}, \quad
  P=\frac{\left(a_4^{2}-a_3^{2}\right)p}{1-p^{2}}, \quad Q=\frac{\left(a_4^{2}-a_3^{2}\right) q}{1-q^{2}},
\end{equation}
and replacing $u$ with $(a_3^2-a_4^4)u$, the Q3 in \eqref{Q3} is written as \cite{AHN-2008,NAH-2009}
\begin{equation}\label{Q3-a1-a2}
  P(u u_{2}+u_{1} u_{12})-Q(u u_1+u_{2} u_{12})-\left(a_1^{2}-a_2^{2}\right)(u_{2} u_1+u u_{12})
  =\delta^{2} \frac{\left(a_1^{2}-a_2^{2}\right)}{4 P Q}.
\end{equation}
Solutions to this equation can be given through those of the Nijhoff-Quispel-Capel (NQC) equation,
which reads \cite{NQC-1983}
\begin{equation}\label{NQC}
  \!\!\!\!\!\!\frac{1\!+\!(a_1\!-\!a_3)s(a_3,a_4)\!-\!(a_1\!+\!a_4)s_1(a_3,a_4)}
  {1\!+\!(a_2\!-\!a_3)s(a_3,a_4)\!-\!(a_2\!+\!a_4) s_2(a_3,a_4)}\!
  =\!\frac{1\!+\!(a_2\!-\!a_4)s_1(a_3,a_4)\!-\!(a_2\!+\!a_3)s_{12}(a_3,a_4)}
  {1\!+\!(a_1\!-\!a_4)s_2(a_3,a_4)\!-\!(a_1\!+\!a_3)s_{12}(a_3,a_4)},
\end{equation}
where $a_3$ and $a_4$ are two extra parameters, $s(a_3,a_4)$ is a function of $(n_1, n_2)$
and it holds specially that
\begin{equation}
s(a_3,a_4)=s(a_4,a_3).
\label{S-34}
\end{equation}
Note that here and below in \eqref{S-V} $s_1, s_2, s_{12}, v_1, v_2$, etc.
are defined as in \eqref{tau-i}, see Remark \ref{Rem-0}.
One can also consider that $s(a_3,a_4)$ is a function of $(n_1, n_2,n_3,n_4)$
with $(a_1,a_2,a_3,a_4)$ as spacing parameters,
while $n_3$ and $n_4$ are dummy variables that do not appear in the NQC equation \eqref{NQC}.
Solution of the Q3 equation \eqref{Q3-a1-a2} can be expressed as   follows  \cite{NAH-2009}
\begin{equation}\label{sol-Q3}
  \begin{aligned}
   u=&c_1F(a_3,a_4)[1-(a_3+a_4)s(a_3,a_4)]+c_2F(a_3,-a_4)[1-(a_3-a_4)s(a_3,-a_4)]\\
&+c_3F(-a_3,a_4)[1\!+\!(a_3\!-\!a_4)s(-a_3,a_4)]+c_4F(-a_3,-a_4)[1+(a_3+a_4)s(-a_3,-a_4)],
  \end{aligned}
\end{equation}
where  the function $F(a_3,a_4)$ is defined as
\begin{equation}\label{F}
  F(a_3,a_4)=\left(\frac{(a_1+a_3)(a_1+a_4)}{(a_1-a_3)(a_1-a_4)}\right)^{\frac{n_1}{2}}
  \left(\frac{(a_2+a_3)(a_2+a_4)}{(a_2-a_3)(a_2-a_4)}\right)^{\frac{n_2}{2}}
\end{equation}
and the constants $c_i,~i=1\cdots 4$, subject to
\begin{equation}\label{ABCD}
  c_1c_4(a_3+a_4)^2-c_2c_3(a_3-a_4)^2=-\frac{\delta^2}{16a_3a_4}.
\end{equation}
Thus, the bilinear form of Q3 can be obtained from the bilinearization of the NQC equation \eqref{NQC} \cite{ZZ-2019}.

Note that the NQC equation can be derived from the following equation set \cite{NQC-1983,NAH-2009,ZZ-2013,HJN-2016}
\begin{subequations}\label{S-V}
\begin{align}
 & 1+(a_1-a_3)s(a_3,a_4)-(a_1+a_4)s_1(a_3,a_4)=v_1(a_3)v(a_4),\\
 &1+(a_2-a_3)s(a_3,a_4)-(a_2+a_4)s_2(a_3,a_4)=v_2(a_3)v(a_4),\\
 & 1+(a_1-a_4)s(a_3,a_4)-(a_1+a_3)s_1(a_3,a_4)=v_1(a_4)v(a_3),\\
 &1+(a_2-a_4)s(a_3,a_4)-(a_2+a_3)s_2(a_3,a_4)=v_2(a_4)v(a_3),
\end{align}
\end{subequations}
together with symmetry \eqref{S-34}
and assumption that $v(a_3)$ is a function related
to $a_3$ but independent of $a_4$, i.e.,
$v(a_4)=v(a_3)|_{a_3\rightarrow a_4}$.
One can bilinearize the above system via the transformation
\begin{equation}
  s(a_3,a_4)=\frac{1}{a_3+a_4}\biggl(1-\frac{f^{34}}{f}\biggr)\biggr|_{n_3=n_4=0},
  ~~v(a_3)=\frac{f^3}{f}\biggr|_{n_3=n_4=0},~~v(a_4)=\frac{f^4}{f}\biggr|_{n_3=n_4=0},
\end{equation}
and get the bilinear form of the NQC equation as well as of the Q3 equation: (cf.\cite{ZZ-2019})
\begin{subequations}\label{be-Q3-1}
  \begin{align}
    &(a_1-a_3)f f_{134}-(a_1+a_4)f_1f_{34}+(a_3+a_4)f_{14}f_3=0,\\
    &(a_2-a_3)f f_{234}-(a_2+a_4)f_2 f_{34}+(a_3+a_4)f_{24}f_3=0,\\
    &(a_2-a_4)ff_{234}-(a_2+a_3)f_2f_{34}+(a_3+a_4)f_4f_{23}=0,\\
    &(a_1-a_4)ff_{134}-(a_1+a_3)f_1f_{34}+(a_3+a_4)f_4f_{13}=0.
  \end{align}
\end{subequations}
This bilinear system involves only the function $f$
and admits the solution in Casoratian form \cite{ZZ-2019}:
\begin{equation}\label{sol-f-Q3}
  f=\bigl\lvert\widehat{N-1}\bigr\rvert_{[\lambda]}, ~~(\lambda=n_1,n_2,n_3,n_4),
\end{equation}
composed by $\psi=(\psi_1,\psi_2,\cdots,\psi_N)^T$ with entries
  \begin{equation}\label{psi-Q3}
  \psi_j(n_1,n_2,n_3,n_4;k_j)= \rho_j^+\prod^4_{i=1}(a_i+k_j)^{n_i}
      +\rho_j^-\prod^4_{i=1}(a_i-k_j)^{n_i}.
    \end{equation}
One can check that the above defined $s$ and $v$ satisfy \eqref{S-34} and $v(a_4)=v(a_3)|_{a_3\rightarrow a_4}$.

In conclusion, we have the following for Q3.

\begin{theorem}\label{th-4}
Q3 equation \eqref{Q3} can be bilinearized as the system \eqref{be-Q3-1} via the transformation
\begin{align}
u=
\left[c_1F(a_3,a_4)\frac{f^{34}}{f}+c_2F(a_3,-a_4)\frac{f_{4}^3}{f}
+c_3F(-a_3,a_4)\frac{f_3^{4}}{f}+c_4F(-a_3,-a_4)\frac{f_{34}}{f}\right]
\biggr|_{n_3=n_4=0},
\end{align}
where $F$ is defined as \eqref{F} and the
constants $c_1,c_2,c_3,c_4$
are subject to \eqref{ABCD}.
The bilinear form allows the Casoratian solution \eqref{sol-f-Q3}
composed by $\psi$ with entries \eqref{psi-Q3}.
\end{theorem}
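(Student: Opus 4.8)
The strategy is to assemble the chain of correspondences already prepared in this section. A solution $f$ of the bilinear system \eqref{be-Q3-1} should produce, through the substitution $s(a_3,a_4)=\frac{1}{a_3+a_4}\bigl(1-\frac{f^{34}}{f}\bigr)\big|_{n_3=n_4=0}$ together with $v(a_3)=\frac{f^3}{f}\big|_{n_3=n_4=0}$ and $v(a_4)=\frac{f^4}{f}\big|_{n_3=n_4=0}$, a solution of the auxiliary system \eqref{S-V}; that system, under \eqref{S-34} and $v(a_4)=v(a_3)|_{a_3\to a_4}$, yields a solution of the NQC equation \eqref{NQC}; and an NQC solution gives a solution of Q3 through \eqref{sol-Q3}, as recalled from \cite{NQC-1983,NAH-2009}. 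Thus the theorem reduces to two verifications: that the stated $u$-transformation coincides with \eqref{sol-Q3} rewritten in the single function $f$, and that the above substitution turns \eqref{S-V} into \eqref{be-Q3-1}.

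For the first verification I would insert $1-(a_3+a_4)s(a_3,a_4)=f^{34}/f$ and its three sign-flipped companions, obtained by the replacements $a_3\to-a_3$ and $a_4\to-a_4$, which in the Casoratian amount to exchanging the down-shifts $E_{n_3}^{-1},E_{n_4}^{-1}$ for the up-shifts $E_{n_3},E_{n_4}$; the four bracketed factors of \eqref{sol-Q3} then become $f^{34}/f$, $f_4^3/f$, $f_3^4/f$, $f_{34}/f$, reproducing the transformation in the statement after restriction to $n_3=n_4=0$. For the second verification I would treat the four equations of \eqref{S-V} one at a time. Substituting the transformation into the first of them and clearing the denominator $ff_1$ gives
\[ (a_1+a_4)f f_1^{34}-(a_1-a_3)f_1 f^{34}-(a_3+a_4)f^4 f_1^3=0, \]
a bilinear relation of fixed total index in $(n_1,n_2,n_3,n_4)$; applying the shift $E_{n_3}E_{n_4}$, which is permissible by the gauge property of discrete bilinear equations \eqref{*d}, returns precisely the first equation of \eqref{be-Q3-1}. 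The remaining three equations of \eqref{be-Q3-1} arise identically from the other three equations of \eqref{S-V}.

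It then remains to confirm that the Casoratian \eqref{sol-f-Q3} with entries \eqref{psi-Q3} solves \eqref{be-Q3-1} and meets the structural hypotheses. From \eqref{psi-Q3} one reads off $E_{n_i}\psi=E_{n_j}\psi+(a_i-a_j)\psi$, so replacing $n_i$-shifts by $n_j$-shifts is a column operation that leaves the determinant invariant; this establishes the equality of the four Casoratian representations claimed in \eqref{sol-f-Q3}, exactly as in Lemma \ref{Lem-3-1}. Each equation of \eqref{be-Q3-1} is a three-term bilinear identity in the directions $n_1,n_3,n_4$ (respectively $n_2,n_3,n_4$), and is verified by a Laplace expansion of an augmented $2N\times 2N$ Casoratian, equivalently a Plücker relation, which is the computation carried out in \cite{ZZ-2019}. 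The symmetry \eqref{S-34} and $v(a_4)=v(a_3)|_{a_3\to a_4}$ follow from the invariance of \eqref{psi-Q3} under the simultaneous exchange $a_3\leftrightarrow a_4$, $n_3\leftrightarrow n_4$, while the constraint \eqref{ABCD} on $c_1,\dots,c_4$ is inherited unchanged from \eqref{sol-Q3}.

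The main obstacle I anticipate is bookkeeping rather than conceptual: one must keep the four sign-flipped copies of $s$ and $v$ matched to the correct up or down shifts in the $n_3,n_4$ directions, and confirm that the restriction $|_{n_3=n_4=0}$ commutes with the $n_1,n_2$ shifts appearing in \eqref{S-V}, so that the gauge shift $E_{n_3}E_{n_4}$ used to pass from the $s,v$ form to \eqref{be-Q3-1} acts consistently on every equation. Once the shift relations extracted from \eqref{psi-Q3} are in hand, the Casoratian verification is routine.
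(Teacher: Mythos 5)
Your proposal is correct and follows essentially the same route as the paper: the chain Q3 $\leftarrow$ \eqref{sol-Q3} $\leftarrow$ NQC \eqref{NQC} $\leftarrow$ the system \eqref{S-V} $\leftarrow$ the bilinear system \eqref{be-Q3-1} via the stated substitution, with the sign-flip/reflection identification of the four terms in the $u$-transformation, the symmetry checks \eqref{S-34} and $v(a_4)=v(a_3)|_{a_3\to a_4}$, and the Casoratian verification deferred to \cite{ZZ-2019}, exactly as the paper does. The only blemish is your appeal to the ``gauge property'' to justify applying $E_{n_3}E_{n_4}$: that property concerns invariance under exponential gauge factors, whereas the step you need is just that a bilinear identity satisfied by $f$ at every lattice point may be freely shifted (equivalently, evaluated at $n_3=n_4=0$ after down-shifting), which is automatic and does not affect the validity of your argument.
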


\subsection{Universal $\tau$-function for the bilinear ABS} \label{sec-3-6}

Let us quickly summarize  this section about the universal $\tau$-function for the bilinear ABS equations.

\begin{theorem}\label{th-5}
The bilinear ABS equations, including \eqref{be-f-H1}, \eqref{be-f-H2}, \eqref{be-H3-f-1},
\eqref{be-H3-f-2}, \eqref{be-f-Q1-1}, \eqref{be-f-Q1-2} and \eqref{be-Q3-1},
share the same  single $\tau$-function
\begin{equation}\label{sol-f-ABS}
    f=\bigl\lvert \widehat{N-1} \bigr\rvert_{[\lambda]}, ~~(\lambda=n_1,n_2,n_3,n_4,x)
\end{equation}
composed by $\psi=(\psi_1,\psi_2,\cdots,\psi_N)^T$ with universal entries
  \begin{equation}
\label{psi-ABS}
 \!\!\psi_j(n_1,n_2,n_3,n_4,x,y;k_j) =
 \rho_j^+e^{k_jx+\frac{y}{a_3+k_j}}\prod^4_{i=1}(a_i+k_j)^{n_i}
+\rho_j^-e^{-k_jx+\frac{y}{a_3-k_j}}\prod^4_{i=1}(a_i-k_j)^{n_i}.
\end{equation}
\end{theorem}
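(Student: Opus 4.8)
The plan is to treat Theorem \ref{th-5} as an assembly of the results already established in Sections \ref{sec-3-1}--\ref{sec-3-5}, the only genuinely new content being that the single entry \eqref{psi-ABS} specializes simultaneously to every individual entry used there. First I would check, one system at a time, that \eqref{psi-ABS} collapses onto the relevant entry once the variables not appearing in that system are frozen. Concretely: freezing $n_3=n_4=y=0$ turns \eqref{psi-ABS} into \eqref{psi-H1} (the factor $(\pm k_j)^{l_1}$ in \eqref{psi-H1} being absorbed into $\rho_j^\pm$, as already noted in Theorem \ref{th-1}); freezing $n_4=x=y=0$ yields \eqref{psi-H3}; freezing $n_4=x=0$ together with the identification $a_3=\delta$ yields \eqref{psi-Q1-2}; and freezing $x=y=0$ yields \eqref{psi-Q3}.

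The mechanism that makes this work is that each frozen variable contributes only a factor of the form $(a_i\pm k_j)^{n_i}$ or $e^{\pm k_j x+y/(a_3\pm k_j)}$, which multiplies $\rho_j^+$ and $\rho_j^-$ separately and can therefore be absorbed into a redefinition of the constants $\rho_j^\pm$. Since each of Theorems \ref{th-1}--\ref{th-4} (together with the H3 and power-background Q1 statements) was established for \emph{arbitrary} $\rho_j^\pm$, the restriction of the universal $f$ to any fixed value of its frozen variables solves the corresponding bilinear system. Because each of the systems \eqref{be-f-H1}, \eqref{be-f-H2}, \eqref{be-H3-f-1}, \eqref{be-H3-f-2}, \eqref{be-f-Q1-1}, \eqref{be-f-Q1-2} and \eqref{be-Q3-1} involves shifts and derivatives only in the variables it retains, holding the remaining variables fixed leaves each equation unchanged; hence the universal $f$ satisfies every one of them identically in all variables at once.

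It then remains to justify the single determinant display \eqref{sol-f-ABS}, namely that $\bigl\lvert\widehat{N-1}\bigr\rvert_{[\lambda]}$ produces the \emph{same} function for $\lambda=n_1,n_2,n_3,n_4,x$. This is the generalization of Lemma \ref{Lem-3-1}: a direct computation on \eqref{psi-ABS} gives $\partial_x\psi_j=(E_{n_i}-a_i)\psi_j$ for every $i=1,2,3,4$ (and, with $a_3=\delta$, the auxiliary relation $\partial_y\psi_j=E_{n_3}^{-1}\psi_j$). Since $E_{n_i}$ acts diagonally, as multiplication by $(a_i\pm k_j)$, on the two exponential components, it commutes with $\partial_x$, so in fact $(E_{n_i}-a_i)^m\psi_j=\partial_x^m\psi_j$ for all $m$. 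Replacing the columns $\partial_x^m\psi$ of the Wronskian $\bigl\lvert\widehat{N-1}\bigr\rvert_{[x]}$ by $(E_{n_i}-a_i)^m\psi$ and then clearing the $-a_i$ contributions by elementary column operations identifies $\bigl\lvert\widehat{N-1}\bigr\rvert_{[x]}$ with $\bigl\lvert\widehat{N-1}\bigr\rvert_{[n_i]}$, which yields the asserted equality of all five forms exactly as in the proof of Lemma \ref{Lem-3-1}.

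The bookkeeping is routine; the only point demanding care is the exact matching of parametrizations across the subsections---in particular the identification $a_3=\delta$ needed to align \eqref{psi-Q1-2} and the relation $\partial_y\leftrightarrow E_{n_3}^{-1}$ with the rest---so that one assignment of the parameters $(a_1,a_2,a_3,a_4)$ and of the variables $(n_1,n_2,n_3,n_4,x,y)$ serves all the systems simultaneously. I expect no genuine obstacle beyond verifying this consistency, since each constituent claim has already been proved in the preceding theorems.
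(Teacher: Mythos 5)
Your proposal is correct and takes essentially the same route as the paper: Theorem \ref{th-5} is presented there as a summary of Sections \ref{sec-3-1}--\ref{sec-3-5}, and its (implicit) justification is exactly the assembly you describe --- the universal entry \eqref{psi-ABS} specializes to \eqref{psi-H1}, \eqref{psi-H3}, \eqref{psi-Q1-2} (with $a_3=\delta$) and \eqref{psi-Q3} after the frozen factors are absorbed into the arbitrary constants $\rho_j^{\pm}$, each bilinear system acts only on its retained variables, and the equality of the forms $\bigl\lvert \widehat{N-1}\bigr\rvert_{[\lambda]}$ follows from the Lemma \ref{Lem-3-1}-type relations $\partial_x\psi_j=(E_{n_i}-a_i)\psi_j$ and $\partial_y\psi_j=E_{n_3}^{-1}\psi_j$. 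No gap to report.
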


\section{Symmetric dAKP system}\label{sec-4}

In the previous section we have expressed the bilinear forms of H1, H2, H3, Q1 and Q3
using a single $\tau$-function in Wronskian-Casoratian form, $f=\big|\widehat{N-1} \bigr |_{[\lambda]}$,
composed by a universal vector $\psi$ (see Theorem \ref{th-5}).
To explain how those bilinear ABS equations are related to the dAKP equation and its deformations,
in the following we will have a close look at the dAKP equations and its reflected forms, which will be used
in reduction.

Consider the following (high order dAKP)  compact form \cite{OHTI-1993}
\begin{equation}\label{3-4-5-D-AKP}
 \left | \begin{matrix}
    1 &a_1 & a_1^2 & \cdots & a_1^{\sigma-2}  &\tau_1 \tau_{\bar 1} \\
    1 &a_2 & a_2^2 &\cdots &a_2^{\sigma-2}&\tau_2\tau_{\b2}\\
    \vdots  &\vdots &\vdots& &\vdots &\vdots\\
    1 &a_{\sigma}&a_{\sigma}^2 &\cdots &a_{\sigma}^{\sigma-2} &\tau_{\sigma}\tau_{\b \sigma}\\
    \end{matrix} \right |=0, ~~ (\mathrm{for} ~\sigma=3,4,5),
  \end{equation}
where the notation $\tau_{\b \sigma}$ can be referred from \eqref{tau-i} in  Sec.\ref{sec-2}.
When $\sigma=3$, it is the dAKP equation \eqref{3D-dAKP-sec2}, i.e.,
\begin{equation}\label{3D-dAKP}
  A\equiv (a_1-a_2)\tau_3 \tau_{12}+(a_2-a_3) \tau_1\tau_{23}+(a_3-a_1) \tau_2 \tau_{13}=0,
    \end{equation}
and when $\sigma=4$, it gives the 4D dAKP equation
  \begin{align}\label{4D-dAKP}
\mathfrak{A}\equiv & (a_2-a_3)(a_3-a_4)(a_2-a_4)\tau_1\tau_{234}
    -(a_1-a_3)(a_3-a_4)(a_1-a_4)\tau_2\tau_{134}\nonumber\\
    &+(a_1-a_2)(a_2-a_4)(a_1-a_4)\tau_3\tau_{124}-(a_1-a_2)(a_2-a_3)(a_1-a_3)\tau_4\tau_{123}=0,
  \end{align}
which can be considered as a result of the 4D consistency of the dAKP equation \eqref{3D-dAKP} \cite{LNSZ-2021}.
The  5D dAKP equation is the case for $\sigma=5$, which we do not write out here the expanded form.
Note that the dAKP equation is 4D consistent \cite{ABS-2012},
and we believe the 5D dAKP equation is also a result of the multidimensional consistency of the dAKP equation.

Solutions  of \eqref{3-4-5-D-AKP} can be given either in Hirota's exponential-polynomial form
\begin{equation}\label{tau-solution}
  \tau=\sum_{\mu=0,1} \exp \left[\sum_{j=1}^{N} \mu_{j} \eta_{j}
  +\sum_{1 \leqslant i<j}^{N} \mu_{i} \mu_{j} a_{i j}\right]
\end{equation}
with the PWF and phase factor
\begin{equation}\label{PWF-dAKP}
  \rho=\mathrm{e}^{\eta_{j}}=\prod_{r=1}^{\sigma} \left(\frac{a_r-q_{j}}{a_r-k_{j}}\right)^{n_r}\eta_{j}^{(0)},
  ~~
  \quad
  A_{i j}=e^{a_{ij}}=\frac{\left(k_{i}-k_{j}\right)\left(q_{i}-q_{j}\right)}
  {\left(k_{i}-q_{j}\right)\left(q_{i}-k_{j}\right)},
  ~~~\sigma=3,4,5,
\end{equation}
where   the summation over $\mu$ means to take all possible $\mu_j= 0, 1~ (j =1, 2,\cdots,N)$,
or in Casoratian form
\begin{equation}\label{caso-solution}
  \tau=\left\lvert \widehat{N-1} \right\rvert_{[\lambda]}, ~~\lambda=n_1,n_2,\cdots,n_{\sigma}, ~~\sigma=3,4,5
\end{equation}
composed by $\psi=(\psi_1,\psi_2,\cdots,\psi_N)^T$ with  entries
\begin{equation}\label{psi-dAKP}
\psi_j(n_1,n_2,\cdots,n_{\sigma},;k_j,q_j)=\rho_j^+\prod_{r=1}^{\sigma}(a_r-q_j)^{n_r}
+\rho_j^-\prod_{r=1}^{\sigma}(a_r-k_j)^{n_r},~~~~\sigma=3,4,5.
\end{equation}

To approach to the bilinear ABS equations, we impose constraints
\begin{equation}\label{constraint condition}
q_j=-k_j
\end{equation}
for $j=1,2,\cdots,N$,
which yield reflection symmetry for the above $\tau$-function \cite{LNSZ-2021}:
\begin{subequations}\label{symmetries-1}
  \begin{align}
    &\tau((n_1,a_1),n_2,\cdots,n_{\sigma})=\tau((-n_1,-a_1),n_2,\cdots,n_{\sigma})=S_1 \tau,\\
    &\tau(n_1,(n_2,a_2),\cdots,n_{\sigma})=\tau(n_1,(-n_2,-a_2),\cdots,n_{\sigma})=S_2 \tau,\\
    &~~~~~~~~~~~~~~~~~~~~~~~~~~~~~~\vdots\nonumber\\
    &\tau(n_1,n_2,\cdots,(n_{\sigma},a_{\sigma}))=\tau(n_1,n_2,\cdots,(-n_{\sigma},-a_{\sigma}))=S_{\sigma} \tau,
    \end{align}
\end{subequations}
for $\sigma=3,4,5$,
where $S_i$ stands for the operator to reflect $(n_1,a_1)$ to $(-n_1,-a_1)$.
Such symmetries then lead us to the reflected forms of the original equations\footnote{
Note that the symmetries can be extended to more general cases, see Remark 2 in \cite{LNSZ-2021}}.
For example, when $\sigma=3$ and $\sigma=4$, the explicit reflected forms of the dAKP eqauation \eqref{3D-dAKP}
and 4D dAKP equation \eqref{4D-dAKP} are respectively
\begin{subequations}\label{symmetric-3D-dAKP}
\begin{align}
&A_1\equiv(a_2-a_3)\tau\tau_{123}+(a_3+a_1)\tau_3\tau_{12}-(a_1+a_2)\tau_2\tau_{13}=0,\label{be-3D-AKP-1}\\
&A_2\equiv(a_3-a_1)\tau\tau_{123}+(a_1+a_2)\tau_1\tau_{23}-(a_2+a_3)\tau_3\tau_{12}=0,\label{be-3D-AKP-2}\\
&A_3\equiv(a_1-a_2)\tau\tau_{123}+(a_2+a_3)\tau_2\tau_{13}-(a_3+a_1)\tau_1\tau_{23}=0,\label{be-3D-AKP-3}
\end{align}
\end{subequations}
and
\begin{subequations}\label{symmetric-4D-dAKP}
\begin{align}
  \mathfrak{A}_1\equiv & (a_2-a_3)(a_3-a_4)(a_2-a_4)\tau\tau_{1234}
  -(a_1+a_3)(a_3-a_4)(a_1+a_4)\tau_{12}\tau_{34}\nonumber\\
  &+(a_1+a_2)(a_2-a_4)(a_1+a_4)\tau_{13}\tau_{24}-(a_1+a_2)(a_2-a_3)(a_1+a_3)\tau_{14}\tau_{23}=0,
  \label{symmetric-4D-dAKP-a}\\
  \mathfrak{A}_2\equiv & (a_2+a_3)(a_3-a_4)(a_2+a_4)\tau_{12}\tau_{34}
  -(a_1-a_3)(a_3-a_4)(a_1-a_4)\tau\tau_{1234}\nonumber\\
  &-(a_1+a_2)(a_2+a_4)(a_1-a_4)\tau_{23}\tau_{14}
  +(a_1+a_2)(a_2+a_3)(a_1-a_3)\tau_{24}\tau_{13}=0,\\
  \mathfrak{A}_3\equiv & -(a_2+a_3)(a_3+a_4)(a_2-a_4)\tau_{13}\tau_{24}
  +(a_1+a_3)(a_3+a_4)(a_1-a_4)\tau_{23}\tau_{14}\nonumber\\
  &+(a_1-a_2)(a_2-a_4)(a_1-a_4)\tau\tau_{1234}-(a_1-a_2)(a_2+a_3)(a_1+a_3)\tau_{34}\tau_{12}=0,\\
  \mathfrak{A}_4\equiv & (a_2-a_3)(a_3+a_4)(a_2+a_4)\tau_{14}\tau_{23}
  -(a_1-a_3)(a_3+a_4)(a_1+a_4)\tau_{24}\tau_{13}\nonumber\\
  &+(a_1-a_2)(a_2+a_4)(a_1+a_4)\tau_{34}\tau_{12}-(a_1-a_2)(a_2-a_3)(a_1-a_3)\tau\tau_{1234}=0.
\end{align}
\end{subequations}
There are also reflected forms for the  5D dAKP equation.
Thus, the following results can be obtained.

\begin{theorem}\label{th-6}
The so-called symmetric dAKP system, consisting of the dAKP equation \eqref{3D-dAKP},
4D dAKP equation \eqref{4D-dAKP},
5D dAKP equation and their reflected forms,
share  the same soliton solutions in Hirota's exponential-polynomial form
\begin{equation}\label{tau-solution-Symmetric}
  \tau=\sum_{\mu=0,1} \exp \left[\sum_{j=1}^{N} \mu_{j} \eta_{j}
  +\sum_{1 \leqslant i<j}^{N} \mu_{i} \mu_{j} a_{i j}\right]
\end{equation}
with PWF and phase factor
\begin{equation}\label{PWF-symmetric-dAKP}
  \rho=\mathrm{e}^{\eta_{j}}=\prod_{r=1}^{\sigma} \left(\frac{a_r+k_{j}}{a_r-k_{j}}\right)^{n_r}\eta_{j}^{(0)},
  ~~\quad
  A_{i j}=e^{a_{i j}}=\frac{\left(k_{i}-k_{j}\right)^2}{\left(k_{i}+k_{j}\right)^2},~~~\sigma=3,4,5,
\end{equation}
or in Casoratian form
\begin{equation}\label{caso-solution-Symmetric}
  \tau=\big\lvert \widehat{N-1} \big\rvert_{[\lambda]}, ~~ \lambda=n_1,n_2,\cdots,n_{\sigma}~, ~~\sigma=3,4,5
\end{equation}
composed by $\psi=(\psi_1,\psi_2,\cdots,\psi_N)^T$ with  entries
\begin{equation}\label{psi-symmetric-dAKP}
\psi_j(n_1,n_2,\cdots,n_{\sigma},;k_j)=\rho_j^+\prod_{r=1}^{\sigma}(a_r+k_j)^{n_r}
+\rho_j^-\prod_{r=1}^{\sigma}(a_r-k_j)^{n_r},~~~~\sigma=3,4,5.
\end{equation}
\end{theorem}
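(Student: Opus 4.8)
The plan is to obtain Theorem~\ref{th-6} as a specialization of the unconstrained result together with the reflection symmetry \eqref{symmetries-1}, rather than by a direct bilinear verification. First I would observe that the high-order dAKP equations \eqref{3-4-5-D-AKP} contain only $\tau$, its shifts, and the spacing parameters $a_r$; they do not involve the soliton parameters $q_j$. Hence any solution of the form \eqref{tau-solution}--\eqref{psi-dAKP} remains a solution after an arbitrary specialization of the $q_j$. Imposing the constraint \eqref{constraint condition}, $q_j=-k_j$, and substituting into \eqref{PWF-dAKP} gives $e^{\eta_j}=\prod_{r}\bigl(\frac{a_r+k_j}{a_r-k_j}\bigr)^{n_r}\eta_j^{(0)}$ and $A_{ij}=\frac{(k_i-k_j)^2}{(k_i+k_j)^2}$, which are precisely \eqref{PWF-symmetric-dAKP}; the same substitution turns \eqref{psi-dAKP} into \eqref{psi-symmetric-dAKP}. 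Thus the functions \eqref{tau-solution-Symmetric} and \eqref{caso-solution-Symmetric} automatically solve the un-reflected 3D, 4D and 5D dAKP equations, and it only remains to treat the reflected forms.

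The second step is to establish the reflection symmetry $S_r\tau=\tau$ for the constrained $\tau$-function, which is cleanest in the exponential--polynomial form \eqref{tau-solution-Symmetric}. Under the substitution $(n_r,a_r)\to(-n_r,-a_r)$ the $r$-th factor of each plane wave transforms as
\begin{equation*}
\Bigl(\frac{a_r+k_j}{a_r-k_j}\Bigr)^{n_r}\longmapsto\Bigl(\frac{-a_r+k_j}{-a_r-k_j}\Bigr)^{-n_r}=\Bigl(\frac{a_r-k_j}{a_r+k_j}\Bigr)^{-n_r}=\Bigl(\frac{a_r+k_j}{a_r-k_j}\Bigr)^{n_r},
\end{equation*}
while all other factors, and the coupling $A_{ij}$ (which is independent of $a_r$ and $n_r$), are unchanged. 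Hence every monomial in \eqref{tau-solution-Symmetric} is invariant, so $S_r\tau=\tau$; this is exactly \eqref{symmetries-1}. Because the Casoratian \eqref{caso-solution-Symmetric} represents the same $N$-soliton $\tau$, it inherits the same invariance, even though the individual entries \eqref{psi-symmetric-dAKP} are not separately invariant.

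The third step uses this symmetry to produce the reflected equations. Since an equation such as \eqref{3D-dAKP} holds identically in $(n_1,n_2,n_3)$ and in the parameters, I would apply the operator $S_1$ to it: the coefficients undergo $a_1\to-a_1$, while each shifted factor is rewritten by combining $S_1 E_{n_1}=E_{n_1}^{-1}S_1$ with $S_1\tau=\tau$, so that, e.g., $S_1(\tau_{12})=\tau_2^{1}$ and $S_1(\tau_{13})=\tau_3^{1}$ acquire a backward $n_1$-shift. Multiplying the resulting identity by $E_{n_1}$ clears all backward shifts and reproduces \eqref{be-3D-AKP-1}; applying $S_2$ and $S_3$ then yields \eqref{be-3D-AKP-2}--\eqref{be-3D-AKP-3}. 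The same mechanism, applied to \eqref{4D-dAKP} and to the 5D equation, produces \eqref{symmetric-4D-dAKP} and its 5D analogues.

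The main obstacle I anticipate is purely combinatorial: verifying that the reflect-and-reshift procedure reproduces \emph{exactly} the listed coefficient/shift patterns, which is lengthy for \eqref{symmetric-4D-dAKP} and for the 5D reflected forms, where many sign and index choices must be matched. A secondary technical point is the reconciliation between the Hirota and Casoratian representations used above; if one prefers a self-contained Casoratian argument, the reflected forms can instead be checked directly as Pl\"ucker relations on the determinant \eqref{caso-solution-Symmetric}, using the shift rules for the entries \eqref{psi-symmetric-dAKP}, but this trades the clean symmetry argument for heavier determinant bookkeeping.
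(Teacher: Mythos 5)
Your proposal is correct and follows essentially the same route as the paper: the paper likewise obtains Theorem \ref{th-6} by specializing the known dAKP solutions of \eqref{3-4-5-D-AKP} (cited from \cite{OHTI-1993}) to the constraint $q_j=-k_j$, invoking the resulting reflection symmetries \eqref{symmetries-1} (cited from \cite{LNSZ-2021}), and letting those symmetries generate the reflected equations \eqref{symmetric-3D-dAKP}, \eqref{symmetric-4D-dAKP} and their 5D analogues. The only difference is that you spell out explicitly the symmetry verification on the Hirota form and the reflect-and-reshift mechanism (together with the gauge-invariance point needed to transfer the result to the Casoratian form), details which the paper delegates to the cited references.
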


\begin{remark}\label{Rem-1}
  Under the constraint condition \eqref{constraint condition},
  besides the symmetries \eqref{symmetries-1},
  the $\tau$-function allows some other symmetries as well.
For example, when $\sigma=4$,
 there is a symmetry
 \[\tau=S_2\circ S_1 \tau,\]
which, from the 4D AKP equation \eqref{4D-dAKP}, leads to a reflected 4D dAKP equation
\begin{subequations}\label{symmetric-4D-dAKP-2}
\begin{align}
  \mathfrak{A}_5\equiv & (a_2+a_3)(a_3-a_4)(a_2+a_4)\tau_2\tau_{134}
  -(a_1+a_3)(a_3-a_4)(a_1+a_4)\tau_{1}\tau_{234}\nonumber\\
  &-(a_1-a_2)(a_2+a_4)(a_1+a_4)\tau_{123}\tau_{4}+(a_1-a_2)(a_2+a_3)(a_1+a_3)\tau_{124}\tau_{3}=0.
\end{align}
There are also symmetries
  \[
  \tau=S_3\circ S_1 \tau,~~ \tau=S_4\circ S_1 \tau,
  \]
which give rise to the following reflected 4D dAKP equations, respectively,
\begin{align}
 \mathfrak{A}_6\equiv & -(a_2+a_3)(a_3+a_4)(a_2-a_4)\tau_3\tau_{124}
  +(a_1-a_3)(a_3+a_4)(a_1+a_4)\tau_{123}\tau_{4}\nonumber\\
  &+(a_1+a_2)(a_2-a_4)(a_1+a_4)\tau_{1}\tau_{234}-(a_1+a_2)(a_2+a_3)(a_1-a_3)\tau_{134}\tau_{2}=0,\\
  \mathfrak{A}_7\equiv &(a_2-a_3)(a_3+a_4)(a_2+a_4)\tau_4\tau_{123}
  -(a_1+a_3)(a_3+a_4)(a_1-a_4)\tau_{124}\tau_{3}\nonumber\\
  &+(a_1+a_2)(a_2+a_4)(a_1-a_4)\tau_{134}\tau_{2}-(a_1+a_2)(a_2-a_3)(a_1+a_3)\tau_{1}\tau_{234}=0,
\end{align}
\end{subequations}
and these equations allow the same soliton solution as those given in Theorem \ref{th-6}.
Equations in \eqref{symmetric-4D-dAKP-2} also belong to the symmetric dAKP system
which will generate bilinear ABS equations in reduction.
\end{remark}

\section{Reductions to the bilinear ABS}\label{sec-5}

In this section, we explain how the bilinear ABS equations are connected to
the symmetric dAKP system via  reductions or continuum limits.

\subsection{Bilinear H1}\label{sec-5-1}

Equation \eqref{be-f-H1a} in the bilinear H1 \eqref{be-f-H1}
is nothing but the bilinear differential-difference AKP equation with one continuous independent variable
(see case (N-2) in \cite{DJM-1982-II}), which is a result of a
continuum limit of the dAKP equation \eqref{3D-dAKP}.
In fact,
let
\begin{equation}\label{continuum limit condition-1}
  a_3,n_3\rightarrow \infty ~~~\text{while keeping}~ x=n_3/a_3 ~~\text{finite},
\end{equation}
and still write $\tau(n_1,n_2,n_3)$ as $\tau(n_1,n_2,x)$ without making confusion, i.e.,
\begin{equation}
 \tau= \tau(n_1,n_2,n_3):=\tau(n_1,n_2,x),
\end{equation}
it then implies that
\begin{equation}
  \tau_3=\tau(n_1,n_2,n_3+1)= \tau\Bigl(n_1,n_2,x+\frac{1}{a_3}\Bigr)
=\tau+\frac{1}{a_3}\partial_x\tau+\frac{1}{2a_3^2}\partial_x^2\tau+\cdots.
\end{equation}
Inserting this Taylor expansion into equation \eqref{3D-dAKP}, from the leading term
we get
\begin{subequations}\label{semi-d-AKP-1}
\begin{equation}
    (a_1-a_2)(\tau\tau_{12}-\tau_1\tau_2)+D_x\tau_1\cdot\tau_2=0, \label{semi-d-AKP-1-1}
\end{equation}
which is \eqref{be-f-H1a}.
Starting from the reflected dAKP equation \eqref{be-3D-AKP-2}
and implementing the same continuum limit yields
\begin{equation}
    (a_1+a_2)(\tau\tau_{12}-\tau_1\tau_2)+D_x\tau\cdot\tau_{12}=0,\label{semi-d-AKP-1-2}
\end{equation}
i.e., \eqref{be-f-H1b}.
\end{subequations}
Thus, the bilinear H1 \eqref{be-f-H1} consists of two differential-difference AKP equations,
which are the continuum limits of the dAKP equation \eqref{3D-dAKP}
and its reflected form \eqref{be-3D-AKP-2}.

As for the $\tau$-function of \eqref{semi-d-AKP-1}
in the form \eqref{tau-solution-Symmetric} or \eqref{caso-solution-Symmetric},
considering the continuum limit of   \eqref{PWF-symmetric-dAKP} and   \eqref{psi-symmetric-dAKP} 
with $\sigma=3$, we have
\begin{equation}\label{PWF-semi-dAKP-1}
 \rho=e^{2k_jx}\left(\frac{a_1+k_{j}}{a_1-k_{j}}\right)^{n_1}\left(\frac{a_2+k_{j}}{a_2-k_{j}}\right)^{n_2}
\eta_{j}^{(0)}
\end{equation}
and
\begin{align}\label{psi-H1-1}
  \psi_j(n_1,n_2,x;k_j)&=\rho_j^+e^{k_jx}(a_1+k_j)^{n_1}(a_2+k_j)^{n_2}
  +\rho_j^-e^{-k_jx}(a_1-k_j)^{n_1}(a_2-k_j)^{n_2}.
\end{align}

\subsection{Bilinear H2}\label{sec-5-2}

We start from the following two reflected equations in the dAKP system:
\begin{subequations}\label{3d-dAKP-134}
  \begin{align}
  & (a_3-a_1)\tau\tau_{134}+(a_1+a_4)\tau_1\tau_{34}-(a_4+a_3)\tau_3\tau_{14}=0,\\
  &(a_3-a_2)\tau\tau_{234}+(a_2+a_4)\tau_2\tau_{34}-(a_4+a_3)\tau_3\tau_{24}=0,\label{3d-dAKP-234}
    \end{align}
\end{subequations}
where their common PWF is
\begin{equation}\label{PWF-3D-symmetric-dAKP-1234}
  \rho=\eta_{j}^{(0)}\prod^4_{i=1}\left(\frac{a_i+k_j}{a_i-k_{j}}\right)^{n_i} ,
 \end{equation}
and the common Casoratian entries are
\begin{align}\label{psi-H2-13}
  \psi_j(n_1,n_2,n_3,n_4;k_j)&=\rho_j^+ \prod^4_{l=1}(a_l+k_j)^{n_l}
  +\rho_j^- \prod^4_{l=1}(a_l-k_j)^{n_l}.
\end{align}
Impose a constraint on the  parameters and introduce a new variable $N_3$ as\footnote{
The constraint $a_3=a_4$ corresponds to the constraint on $\tau$:
$\tau(n_1,n_2,n_3+1,n_4)=\tau(n_1,n_2,n_3,n_4+1)$.}
\begin{equation}
a_3=a_4,~~N_3=a_3+a_4,
\end{equation}
which turn the PWF \eqref{PWF-3D-symmetric-dAKP-1234} to a form
\begin{equation}\label{PWF-2D-symmetric-dAKP-123}
  \rho=\left(\frac{a_1+k_j}{a_1-k_{j}}\right)^{n_1}\left(\frac{a_2+k_j}{a_2-k_{j}}\right)^{n_2}
  \left(\frac{a_3+k_j}{a_3-k_{j}}\right)^{N_3},
 \end{equation}
then the system \eqref{3d-dAKP-134}  is reduced  to
\begin{subequations}\label{2D-dAKP-123}
\begin{align}
  & (a_3-a_1)\tau\tau_{133}+(a_1+a_3)\tau_1\tau_{33}-2a_3\tau_3\tau_{13}=0,\\
  &(a_3-a_2)\tau\tau_{233}+(a_2+a_3)\tau_2\tau_{33}-2a_3\tau_3\tau_{23}=0.
\end{align}
\end{subequations}
Note that here $\tau_3$ means $\tau(N_3+1)$.
Next, let
\begin{equation}\label{continuum limit condition-2}
  a_3,N_3\rightarrow \infty ~~~\text{while keeping}~ x=N_3/a_3 ~~\text{finite},
\end{equation}
and rewrite the dependent variable as
\begin{equation}
 \tau= \tau(n_1,n_2,N_3):=\tau(n_1,n_2,x).
\end{equation}
Then, one can expand \eqref{2D-dAKP-123} in terms of $1/a_3$
and from the leading terms we have
the bilinear equations \eqref{be-f-H2} of H2, i.e.,
\begin{subequations}\label{be-H2-f-2}
  \begin{align}
    &  (D_x^2+2a_1D_x) \tau\cdot  \tau_1=0,\\
    &  (D_x^2+2a_2D_x) \tau\cdot  \tau_2=0.
  \end{align}
\end{subequations}
The corresponding PWF $\rho$ and  the Casoratian entries $\psi_j$
are given as
\begin{equation}\label{PWF-semi-dAKP-H2-1}
  \rho=\left(\frac{a_1+k_{j}}{a_1-k_{j}}\right)^{n_1}
  \left(\frac{a_2+k_{j}}{a_2-k_{j}}\right)^{n_2} e^{2k_j x}\eta_{j}^{(0)},
\end{equation}
and
\begin{align}\label{psi-H2-13s}
  \psi_j(n_1,n_2,x;k_j)&=\rho_j^+e^{k_j x}
  \prod^2_{l=1}(a_l+k_j)^{n_l} +\rho_j^-e^{-k_j x}\prod^2_{l=1}(a_l-k_j)^{n_l}.
\end{align}

\subsection{Bilinear H3}\label{sec-5-3}

As mentioned in subsection \ref{sec-3-3}, there are two bilinear forms of H3 equation,
i.e., equations \eqref{be-H3-f-1} and \eqref{be-H3-f-2}.
In equation \eqref{be-H3-f-1}, the first two equations can be considered as
the reflected dAKP equations, which are all featured by their total index (1,1,1).
However, the third equation  \eqref{be-H3-3} has total index (1,1,0).
In the next part we will look at the 4D dAKP equation  which has total index (1,1,1,1)
and try obtaining  \eqref{be-H3-3} through dimensional reductions.

Consider the reflected 4D dAKP equation \eqref{symmetric-4D-dAKP-a},
which has a 4D PWF \eqref{PWF-3D-symmetric-dAKP-1234}.
Introduce constraint $a_4=-a_3$, which gives rise to the reduction
\begin{equation}\label{reduction-H3-tau-2}
\tau(n_1,n_2,n_3+1,n_4)=\tau(n_1,n_2,n_3,n_4-1).
\end{equation}
The  PWF \eqref{PWF-3D-symmetric-dAKP-1234} is reduced to
\begin{equation}\label{PWF-H3}
  \rho=\left(\frac{a_1+k_j}{a_1-k_{j}}\right)^{n_1}\left(\frac{a_2+k_j}{a_2-k_{j}}\right)^{n_2}
  \left(\frac{a_3+k_j}{a_3-k_{j}}\right)^{N_3}\eta_{j}^{(0)},
 \end{equation}
 where $N_3 =n_3-n_4$,
 and the  reflected 4D dAKP equation \eqref{symmetric-4D-dAKP-a}
gives rise to
\begin{align}
2a_3(a_1-a_2)\tau\tau_{12}+(a_3-a_1)(a_2+a_3)\tau_{13}\tau_2^3+(a_1+a_3)(a_2-a_3)\tau_{23}\tau_1^3=0,
\label{3D-H3-3}
\end{align}
which is exactly the third bilinear equation \eqref{be-H3-3}.
Here by $\tau_3$ and $\tau^3$ we mean $\tau(N_3+1)$ and $\tau(N_3-1)$, respectively.
With the above PWF \eqref{PWF-H3}, the reflected dAKP equations \eqref{be-3D-AKP-2} and \eqref{be-3D-AKP-1}
hold, and so do \eqref{be-H3-1} and \eqref{be-H3-2}.
For convenience, we replace $N_3$ by $n_3$ in \eqref{PWF-H3}
and then the $\tau$-function in \eqref{be-H3-f-1} will be given by Theorem \ref{th-6} with $\sigma=3$.

The second bilinear form \eqref{be-H3-f-2} consists of two  2D equations in 3D space.
The total index patterns of them are $(1,0,0)$ and $(0,1,0)$, respectively.
Note that $n_2$ is a dummy variable in \eqref{be-H3-1-1},
and  likewise  $n_1$ for \eqref{be-H3-1-2}.
To get these two equations simultaneously, we consider two 3D dAKP equations in 4D space:
\begin{subequations}\label{3D-dAKP-two}
 \begin{align}
 & (a_1-a_3)\tau_4 \tau_{13}+(a_3-a_4) \tau_1\tau_{34}+(a_4-a_1) \tau_3 \tau_{14}=0,\\
 & (a_2-a_3)\tau_4 \tau_{23}+(a_3-a_4) \tau_2\tau_{34}+(a_4-a_2) \tau_3 \tau_{24}=0,
 \end{align}
 \end{subequations}
which allow a $\tau$-function \eqref{tau-solution-Symmetric}
with the PWF \eqref{PWF-3D-symmetric-dAKP-1234}\footnote{Note that
both equations in \eqref{3D-dAKP-two} are dAKP equations, which allow
a more general PWF like the one in \eqref{PWF-dAKP} or in \eqref{psi-dAKP} with $\sigma=4$.
However, to meet the definition $N_3=n_3-n_4$, we shall consider the dKdV type PWFs
in \eqref{PWF-symmetric-dAKP} or \eqref{psi-symmetric-dAKP} with $\sigma=4$.}.
Introducing reduction $a_4=-a_3$  which corresponds to the reduction \eqref{reduction-H3-tau-2} on $\tau$,
the PWF \eqref{PWF-3D-symmetric-dAKP-1234} goes to \eqref{PWF-H3} with $N_3 =n_3-n_4$
and \eqref{3D-dAKP-two} gives rise to \eqref{be-H3-f-2}.

\subsection{Bilinear Q1}\label{sec-5-4}

For  Q1 equation with a power background,
its bilinear form \eqref{be-f-Q1-1} is the same as  \eqref{be-H3-f-2} of H3 equation,
which can be obtained from the dAKP equations \eqref{3D-dAKP-two} through a one-step reduction.
In the next part we look at the bilinear form \eqref{be-f-Q1-2} for Q1 equation with a linear background,
which is related to skew-limit and much more complicated than the power background case.

The first three equations in \eqref{be-f-Q1-2} are the dAKP and its reflected forms,
while \eqref{be-Q1-3} and \eqref{be-Q1-4}
are (3+1)D bilinear equations which have total index $(1,1,1)$ in discrete directions.
To obtain these two equations, we take a two-step procedure.
We will start from certain 5D dAKP system with total index $(1,1,1,1,1)$
and take dimensional reductions to  obtain some 4D equations with total index $(1,1,1,0)$,
on which we then perform a skew continuum limit to get some (3+1)D semi-discrete equations
with total index $(1,1,1)$ in discrete directions.

Let us consider the 5D dAKP equation \eqref{3-4-5-D-AKP}, i.e.,
\begin{equation}\label{5-D-AKP}
  \mathcal{A}  \equiv\left | \begin{matrix}
     1 &a_1 & a_1^2 & a_1^{3}  &\tau_1 \tau_{2345} \\
     1 &a_2 & a_2^2  &a_2^{3}&\tau_2\tau_{1345}\\
     1 &a_3 & a_3^2  &a_3^{3}&\tau_3\tau_{1245}\\
     1 &a_4 & a_4^2  &a_4^{3}&\tau_4\tau_{1235}\\
     1 &a_5&a_5^2  &a_5^{3} &\tau_5\tau_{1234}\\
     \end{matrix} \right |=0,
   \end{equation}
and its 5 reflected forms
\begin{subequations}\label{5D-ref-AKP}
  \begin{align}
    & \mathcal{A}_1= \mathcal{A}|_{a_1\rightarrow-a_1,n_1\rightarrow-n_1},\\
    & \mathcal{A}_2= \mathcal{A}|_{a_2\rightarrow-a_2,n_2\rightarrow-n_2},\\
    & \mathcal{A}_3= \mathcal{A}|_{a_3\rightarrow-a_3,n_3\rightarrow-n_3},\\
    & \mathcal{A}_4= \mathcal{A}|_{a_4\rightarrow-a_4,n_4\rightarrow-n_4},\\
    & \mathcal{A}_5= \mathcal{A}|_{a_5\rightarrow-a_5,n_5\rightarrow-n_5},
  \end{align}
\end{subequations}
with the PWF
\begin{equation}\label{PWF-Q1-1}
  \rho=\eta_{j}^{(0)}\prod^5_{i=1}\left(\frac{a_i+k_{j}}{a_i-k_{j}}\right)^{n_i}.
  \end{equation}
Next, by imposing constraint
\begin{equation}\label{constraint-Q1}
  a_4=-a_3,
\end{equation}
the PWF \eqref{PWF-Q1-1} is reduced to
 \begin{equation}\label{PWF-Q1-3}
  \rho=\left(\frac{a_1+k_{j}}{a_1-k_{j}}\right)^{n_1}\left(\frac{a_2+k_{j}}{a_2-k_{j}}\right)^{n_2}
  \left(\frac{a_3+k_{j}}{a_3-k_{j}}\right)^{N_3}\left(\frac{a_5+k_{j}}{a_5-k_{j}}\right)^{n_5}\eta_{j}^{(0)}
   \end{equation}
where $N_3=n_3-n_4$.
The constraint \eqref{constraint-Q1} indicates
$\tau(n_1,n_2,n_3+1,n_4,n_5)=\tau(n_1,n_2,n_3,n_4-1,n_5),$
which leads the 5D system \eqref{5-D-AKP} and \eqref{5D-ref-AKP}
to the following 4D system with total index $(1,1,0,1)$:
\begin{subequations}\label{4D-AKP-equations-1}
\begin{align}
  \mathcal{A}_1^{(1)}\equiv&- (a_1 - a_2) (a_1 - a_3) (a_1 - a_5) (a_2 - a_3) (a_2 - a_5) (a_3 - a_5)
  \tau_{1235} \tau^3\nonumber \\
  &- (a_1 - a_2) (a_1 + a_3) (a_1 - a_5) (a_2 + a_3) (a_2 - a_5) (a_3 + a_5) \tau_{125}^3 \tau_3\nonumber\\
&+2 a_3(a_1 - a_2) (a_1^2 - a_3^2) (a_2^2 - a_3^2)  \tau_{12} \tau_5
 + 2 a_3  (a_1 - a_5) (a_1^2 - a_3^2) (a_3^2 - a_5^2)\tau_{15} \tau_2 \nonumber\\
 &- 2 a_3(a_2 - a_5) (a_2^2 - a_3^2) (a_3^2 - a_5^2) \tau_1\tau_{25}=0,\label{A-1}\displaybreak\\
 \mathcal{A}_2^{(1)}\equiv&  (a_1 +a_2) (a_2 - a_3) (a_1 + a_3) (a_2 - a_5) (a_3 - a_5) (a_1 + a_5)
 \tau_1^3\tau_{235}  \nonumber\\
 & + (a_1 + a_2) (a_1 - a_3) (a_2 + a_3) (a_2 - a_5) (a_1 +  a_5) (a_3 + a_5) \tau_{13} \tau_{25}^3\nonumber\\
&-2 a_3 (a_2 - a_5)(a_2^2 - a_3^2) (a_3^2 - a_5^2) \tau \tau_{125}
- 2a_3 (a_1 + a_2) (a_1^2 - a_3^2) (a_2^2 - a_3^2)  \tau_{15} \tau_2\nonumber\\
 &  - 2a_3  (a_1 + a_5)  (a_1^2 - a_3^2) (a_3 ^2- a_5^2) \tau_{12}\tau_5=0,\label{A1-1}\\
 \mathcal{A}_3^{(1)}\equiv &- (a_1 + a_2) (a_1 - a_3) (a_2 + a_3) (a_1 - a_5) (a_3 - a_5) (a_2 + a_5)
 \tau_{135} \tau_2^3\nonumber\\
 &- (a_1 + a_2) (a_2 - a_3) (a_1 + a_3) (a_1 - a_5) (a_2 + a_5) (a_3 + a_5) \tau_{15}^3 \tau_{23} \nonumber\\
 &+ 2 a_3 (a_2 + a_5) (a_2^2 - a_3^2) (a_3^2 - a_5^2)  \tau_{12} \tau_5
 +2 a_3 (a_1 - a_5) (a_1^2 -a_3^2) (a_3^2 - a_5^2)  \tau \tau_{125} \nonumber\\
 &+ 2  a_3(a_1 + a_2) (a_1^2 - a_3^2) (a_2^2 - a_3^2) \tau_1 \tau_{25}=0 ,\label{A2-1}\\
 \mathcal{A}_4^{(1)}\equiv& -(a_1 - a_2) (a_1 + a_3) (a_2 + a_3) (a_1 + a_5) (a_2 + a_5) (a_3 - a_5)
 \tau_{12}^3 \tau_{35}  \nonumber\\
& -(a_1 - a_2) (a_1 - a_3) (a_2 - a_3) (a_1 +  a_5) (a_2 + a_5) (a_3 + a_5)\tau_{123} \tau_5^3\nonumber\\
 &+2 a_3(a_1 - a_2) (a_1^2 - a_3^2) (a_2^2 - a_3^2)  \tau \tau_{125}
   - 2a_3 (a_2 + a_5)(a_2^2 - a_3^2) (a_3^2 - a_5^2) \tau_{15} \tau_2\nonumber\\
  & + 2 a_3 (a_1 + a_5) (a_1^2 - a_3^2) (a_3^2- a_5^2) \tau_1 \tau_{25}=0.\label{A3-1}
\end{align}
\end{subequations}
Note that here $\tau_3=\tau(N_3+1)$.
To perform skew continuum limit, we need to convert the above system
from the coordinates $\{n_1,n_2,N_3,n_5\}$ into  the coordinates $\{n_1,n_2,N_3^\prime=N_3+n_5,n_5\}$.
After that, we come to a 4D system with total index $(1,1,1,1)$:
\begin{subequations}\label{4D-AKP-equations-2}
  \begin{align}
    \mathcal{A}_1^{(2)}\equiv&- (a_1 - a_2) (a_1 - a_3) (a_1 - a_5) (a_2 - a_3) (a_2 - a_5) (a_3 - a_5)
    \tau_{12335}  \tau^3\nonumber \\
    &- (a_1 - a_2) (a_1 + a_3) (a_1 - a_5) (a_2 + a_3) (a_2 - a_5) (a_3 + a_5) \tau_{125} \tau_3\nonumber\\
  &+2 a_3(a_1 - a_2) (a_1^2 - a_3^2) (a_2^2 - a_3^2)  \tau_{12} \tau_{35}
   + 2 a_3  (a_1 - a_5) (a_1^2 - a_3^2) (a_3^2 - a_5^2)\tau_{135} \tau_2 \nonumber\\
   &- 2 a_3(a_2 - a_5) (a_2^2 - a_3^2) (a_3^2 - a_5^2)  \tau_1\tau_{235}=0,\\
   \mathcal{A}_2^{(2)}\equiv&  (a_1 +a_2) (a_2 - a_3) (a_1 + a_3) (a_2 - a_5) (a_3 - a_5) (a_1 + a_5)
    \tau_1^3\tau_{2335}  \nonumber\\
   & + (a_1 + a_2) (a_1 - a_3) (a_2 + a_3) (a_2 - a_5) (a_1 +  a_5) (a_3 + a_5) \tau_{13} \tau_{25}\nonumber\\
  &-2 a_3 (a_2 - a_5)(a_2^2 - a_3^2) (a_3^2 - a_5^2) \tau \tau_{1235}
  - 2a_3 (a_1 + a_2) (a_1^2 - a_3^2) (a_2^2 - a_3^2)  \tau_{135} \tau_2\nonumber\\
   &  - 2a_3  (a_1 + a_5)  (a_1^2 - a_3^2) (a_3 ^2- a_5^2) \tau_{12}\tau_{35}=0,\\
   \mathcal{A}_3^{(2)}\equiv &- (a_1 + a_2) (a_1 - a_3) (a_2 + a_3) (a_1 - a_5) (a_3 - a_5) (a_2 + a_5)
   \tau_{1335} \tau_2^3\nonumber\\
   &- (a_1 + a_2) (a_2 - a_3) (a_1 + a_3) (a_1 - a_5) (a_2 + a_5) (a_3 + a_5) \tau_{15} \tau_{23} \nonumber\\
   &+ 2 a_3 (a_2 + a_5) (a_2^2 - a_3^2) (a_3^2 - a_5^2)  \tau_{12} \tau_{35}
   +2 a_3 (a_1 - a_5) (a_1^2 -a_3^2) (a_3^2 - a_5^2)  \tau \tau_{1235} \nonumber\\
   &+ 2  a_3(a_1 + a_2) (a_1^2 - a_3^2) (a_2^2 - a_3^2) \tau_1 \tau_{235}=0 , \\
   \mathcal{A}_4^{(2)}\equiv& -(a_1 - a_2) (a_1 + a_3) (a_2 + a_3) (a_1 + a_5) (a_2 + a_5) (a_3 - a_5)
   \tau_{12}^3 \tau_{335}  \nonumber\\
  & -(a_1 - a_2) (a_1 - a_3) (a_2 - a_3) (a_1 +  a_5) (a_2 + a_5) (a_3 + a_5) \tau_{123} \tau_5\nonumber\\
   &+2 a_3(a_1 - a_2) (a_1^2 - a_3^2) (a_2^2 - a_3^2)  \tau \tau_{1235}
    - 2a_3 (a_2 + a_5)(a_2^2 - a_3^2) (a_3^2 - a_5^2) \tau_{135} \tau_2\nonumber\\
    & + 2 a_3 (a_1 + a_5) (a_1^2 - a_3^2) (a_3^2- a_5^2) \tau_1 \tau_{235}=0,
  \end{align}
  \end{subequations}
and correspondingly, the PWF takes a form
\begin{equation}\label{PWF-4D-dAKP-3}
  \rho=\left(\frac{a_1+k_{j}}{a_1-k_{j}}\right)^{n_1}\left(\frac{a_2+k_{j}}{a_2-k_{j}}\right)^{n_2}
  \left(\frac{a_3+k_{j}}{a_3-k_{j}}\right)^{N_3^\prime}
  \left(\frac{a_3-k_{j}}{a_3+k_{j}}\cdot
  \frac{a_5+k_{j}}{a_5-k_{j}}\right)^{n_5}\eta_{j}^{(0)}.
  \end{equation}
Note that in \eqref{4D-AKP-equations-2} by $\tau_3$ we mean $\tau(N_3^\prime+1)$.
Now set $\varepsilon=a_5-a_3$.
Then the limit scheme
\begin{equation}\label{limit}
  N_3\rightarrow-\infty,~n_5\rightarrow\infty,~\varepsilon\rightarrow 0,~~
  \text{while}~ n_3 ~\text{and}~ y:= n_5\varepsilon~\text{being finite},
\end{equation}
brings the factor $ \left(\frac{a_3-k_{j}}{a_3+k_{j}}\cdot
  \frac{a_5+k_{j}}{a_5-k_{j}}\right)^{n_5}$ in the PWF \eqref{PWF-4D-dAKP-3} to
\begin{align*}
&\underset{\varepsilon\to 0}{\underset{n_5\to \infty}{\lim}}
\left(1+\frac{2k_j(a_3-a_5)n_5}{(a_3+k_j)(a_5-k_j)n_5}\right)^{n_5}
=\underset{\varepsilon\to 0}{\underset{n_5\to \infty}{\lim}}
\left(1+\frac{-2k_j(y-y_0)}{(a_3+k_j)(a_3-k_j+\varepsilon)n_5}\right)^{n_5} \\
=&\underset{\varepsilon\to 0}{\underset{n_5\to \infty}{\lim}}
\left(1+\frac{1}{n_5}\cdot\frac{-2k_j(y-y_0)}{(a_3^2-k_j^2)+\varepsilon(a_3+k_j)}\right)^{n_5}
={\underset{n_5\to \infty}{\lim}}\left(1+\frac{1}{n_5}\cdot
\frac{-2k_j(y-y_0)}{(a_3^2-k_j^2)}\right)^{n_5} \\
=&\exp\left(\frac{-2k_jy}{a_3^2-k_j^2}\right).
\end{align*}
For convenience, we write
\begin{equation}
  \tau=\tau(n_1,n_2,N_3^\prime,n_5):=\tau(n_1,n_2,N_3^\prime,y),
\end{equation}
which indicates
\begin{equation}
  \tau_5=\tau(n_1,n_2,N_3^\prime,n_5+1)=\tau(n_1,n_2,N_3^\prime,y+\varepsilon)
  =\tau+\varepsilon \partial_y\tau+\frac{1}{2}\varepsilon^2\partial_y^2\tau+\cdots.
\end{equation}
In such a scheme, the leading terms of the  system \eqref{4D-AKP-equations-2} yield
\begin{subequations}\label{(3+1)AKP}
  \begin{align}
    \mathcal{A} _1^{(3)}\equiv&-4a_3^2 (a_1 + a_3) (a_1 - a_3)^2  \tau_{13} \tau_2
    + (a_1 - a_2) (a_1 -  a_3)^2 (a_2 - a_3)^2 \tau_{1233} \tau^3\nonumber\\
   & -(a_1 - a_2) (a_1 + a_3) (a_2 +  a_3) (a_1 (a_2 - 3 a_3) + a_3 (-3 a_2 + 5 a_3)) \tau_{12} \tau_3 \nonumber\\
    & +4  a_3^2(a_2 + a_3)(a_2 - a_3)^2 \tau_1 \tau_{23}
    + 2 a_3(a_1 - a_2)  (a_1^2 - a_3^2) (a_2^2 - a_3^2)D_y\tau_3\!\cdot\!\tau_{12}=0 ,\\
    \mathcal{A} _2^{(3)}\equiv&4 a_3^2  (a_2 + a_3)(a_2 - a_3)^2 \tau \tau_{123}
    - (a_1 + a_2) (a_2 - a_3)^2 (a_1 + a_3)^2 \tau_1^3 \tau_{233}\nonumber\\
& + (a_1 + a_2) (a_1 - a_3) (a_2 + a_3) (a_1 (a_2 - 3 a_3) + (3 a_2 - 5 a_3) a_3) \tau_{13} \tau_2\nonumber\\
&+ 4 a_3^2  (a_1 - a_3)(a_1 + a_3)^2 \tau_{12} \tau_3+2 a_3 (a_1 + a_2) (a_1^2 - a_3^2) (a_2^2- a_3^2)
D_y \tau_2\!\cdot\!\tau_{13} =0,\\
\mathcal{A} _3^{(3)}\equiv&-4 a_3^2 (a_1 + a_3) (a_1 - a_3)^2 \tau \tau_{123}
+(a_1 + a_2) (a_1 - a_3)^2 (a_2 + a_3)^2 \tau_{133}\tau_2^3\nonumber\\
&-(a_1 + a_2) (a_2 - a_3) (a_1 + a_3) (a_1 (a_2 + 3 a_3) - a_3 (3 a_2 + 5 a_3)) \tau_1\tau_{23}\nonumber\\
&-4 a_3^2  (a_2 - a_3)(a_2 + a_3)^2 \tau_{12}\tau_3- 2a_3  (a_1 + a_2) (a_1^2 - a_3^2) (a_2^2 - a_3^2)
 D_y\tau_1\!\cdot\!\tau_{23}=0, \\
\mathcal{A} _4^{(3)}\equiv&4 a_3^2 (a_2 - a_3) (a_2 + a_3)^2 \tau_{13}\tau_2
+ (a_1 - a_2) (a_1 + a_3)^2 (a_2 + a_3)^2 \tau_{12}^3\tau_{33}\nonumber\\
&-(a_1 - a_3) (a_2 - a_3) (a_1^2 (a_2 + 3 a_3) - a_2 a_3 (3 a_2 + 5 a_3) - a_1 (a_2^2 - 5 a_3^2))
 \tau\tau_{123}\nonumber\\
&-\!4 a_3^2 (a_1 - a_3) (a_1 + a_3)^2 \tau_1\tau_{23}\! -\! 2 a_3 (a_1 - a_2)(a_1^2 - a_3^2) (a_2^2 - a_3^2)
D_y\tau\!\cdot\!\tau_{123}=0.
  \end{align}
\end{subequations}
Thus, we arrive at a (3+1)D system  with total index $(1,1,1)$ in three discrete directions,
and the corresponding PWF  is
\begin{equation}\label{PWF-(3+1)D-dAKP}
  \rho= \left(\frac{a_1+k_{j}}{a_1-k_{j}}\right)^{n_1}\left(\frac{a_2+k_{j}}{a_2-k_{j}}\right)^{n_2}
  \left(\frac{a_3+k_{j}}{a_3-k_{j}}\right)^{N_3^\prime}\exp\left(\frac{-2k_jy}{a_3^2-k_j^2}\right)\eta_{j}^{(0)}.
   \end{equation}

To recover \eqref{be-Q1-3} and \eqref{be-Q1-4}, i.e., $\mathcal{Q}_3=0$ and $\mathcal{Q}_4=0$,
we consider combinations
   \begin{subequations}
    \begin{align}
      &(a_1+a_2) \mathcal{A}_1^{(3)}+(a_1- a_3) \mathcal{A}_2^{(3)}+(a_2-a_3) \mathcal{A}_3^{(3)}=0,\\
      &(a_2+a_3) \mathcal{A}_2^{(3)}+( a_1+a_3)\mathcal{A}_3^{(3)}-(a_1+a_2) \mathcal{A}_4^{(3)}=0,
    \end{align}
    \end{subequations}
    which give rise to
    \begin{subequations}\label{Q3-Q4-A}
      \begin{align}
        &\{2 a_1 a_3 (a_2^2 - a_3^2) +a_3^2 (7 a_2^2 - 2 a_2 a_3 - 13 a_3^2)
        +  a_1^2 (-a_2^2 + 2 a_2 a_3 + 7 a_3^2)\} A\nonumber\\
        &-(a_2-a_3)(a_1 - a_3) \dot{A}- 2 a_3 (a_1^2 - a_3^2) (a_2^2 - a_3^2)\mathcal{Q}_3=0,\label{Q3-A}\\
      &\{  2 a_1 a_3 (a_2^2 - a_3^2)+ a_3^2 (-7 a_2^2 - 2 a_2 a_3 + 13 a_3^2)
      +a_1^2 (a_2^2 + 2 a_2 a_3 - 7 a_3^2)\} A_3\nonumber\\
     & +(a_1 + a_3) (a_2 + a_3) \dot{A}_1+2 a_3 (a_1^2 - a_3^2) (a_2^2 - a_3^2)\mathcal{Q}_4=0,\label{Q4-A}
      \end{align}
      \end{subequations}
where $A$ and $A_3$ are the dAKP \eqref{3D-dAKP} and reflected form \eqref{be-3D-AKP-3}, respectively,
and the  notation $\dot{A}$ stands for
\begin{align}\label{dot A}
\dot{A}:=& (a_2 -a_3)(a_1 + a_3)^2\tau_1^3\tau_{233}- (a_1 - a_2) (a_1 - a_3) (a_2 - a_3)\tau^3\tau_{1233}
\nonumber\\
 & -(a_1 - a_3)(a_2 +a_3)^2  \tau_2^{3} \tau_{133} +4a_3^2  (a_1 - a_2) \tau\tau_{123},
\end{align}
and $\dot{A}_1$ is a reflected form of $\dot{A}$ \eqref{dot A} by taking
\begin{equation}
  \tau(n_1,n_2,(N_3^\prime,a_3))=\tau(n_1,n_2,(-N_3^\prime,-a_3)).
\end{equation}
Thus, if we can prove $\dot{A}=0$, we may recover equations $\mathcal{Q}_3=0$ and
$\mathcal{Q}_4=0$ from \eqref{Q3-Q4-A}.
Meanwhile, the $\tau$-function composed by the PWF \eqref{PWF-(3+1)D-dAKP}
satisfies the first three bilinear equations in \eqref{be-f-Q1-2}.

As for equation  $\dot{A}=0$, which is an  8-point 3D equation, i.e.,
\begin{align}\label{alter-dAKP}
  \dot{A}:=& (a_2 -a_3)(a_1 + a_3)^2\tau_1^3\tau_{233}- (a_1 - a_2) (a_1 - a_3) (a_2 - a_3)\tau^3\tau_{1233}
  \nonumber\\
  & -(a_1 - a_3)(a_2 +a_3)^2  \tau_2^3 \tau_{133} +4a_3^2  (a_1 - a_2) \tau\tau_{123}=0,
\end{align}
we have the following remarks. More details about this equation can be found in Appendix \ref{App-A}.
\begin{remark}\label{Rem-2}
Equation \eqref{alter-dAKP} holds when  $\tau$ is defined as in Theorem \ref{th-6} with $\sigma=4$.
A direct proof can be found in Appendix \ref{App-A}.
\end{remark}

\begin{remark}\label{Rem-3}
Performing the straight continuum limit
 \[ a_3,n_3\rightarrow \infty, ~~ x=n_3/a_3, ~~x~\text{keeps  finite},\]
and still writing $\tau(n_1,n_2,n_3)=\tau(n_1,n_2,x)$,
 the leading term of \eqref{alter-dAKP}  gives rise to the semi-discrete AKP equation \eqref{semi-d-AKP-1-1}, i.e.,
\begin{equation}
(a_1-a_2)(\tau\tau_{12}-\tau_1\tau_2)+D_x\tau_1\cdot\tau_2=0.
\end{equation}
\end{remark}

\subsection{Bilinear Q3}\label{sec-5-3}

The four equations in the  bilinear Q3 system \eqref{be-Q3-1} are all reflected dAKP equations.

\section{Conclusions}\label{sec-6}

In this paper, we have successfully established the connection between the bilinear ABS equations
(except Q2 and Q4) and the symmetric dAKP system.
The later is a system generated by the principle equation (3D) dAKP equation and its reflected forms.
Note that high order dAKP equations in this system are considered as the consequence of the multi-dimensional
consistency of the 3D dAKP equation.
We reviewed the bilinear ABS equations to express them in terms of a single $\tau$-function.
It is notable that for H2 equation we gave a new formulation (see \eqref{3.17}) of its bilinearization.
The present bilinear H2 system \eqref{be-f-H2} contains two equations.
Note that this bilinear H2 system can also be viewed as
the deformed bilinear B\"acklund transformation of the KdV equation \cite{CMZ-2021,CBC-CTP-2004}.
For Q1 equation with linear background, we rewrote
two bilinear equations
so that they can be presented in terms of Hirota's $D$-operator.

The symmetric dAKP system allows more high order bilinear equations which are available in implementing reductions.
As for the connection between the bilinear ABS equations and the  symmetric dAKP system,
bilinear H1 system consists of differential-difference forms of the dAKP equation and its reflected form.
Bilinear H2 system contains two bilinear equations that are differential-difference forms of two reduced dAKP equations.
H3 equation has two bilinear forms, which can be obtained from dimensional reductions
of a reflected 4D dAKP equation and two 3D dAKP equations in 4D space.
Q1 equation with power background is one of cases of H3 equation.
However, Q1 equation with linear background is more   complicated.
Two equations in its bilinear form are related to the reduced 5D dAKP and its reflected forms
and further skew-continuum limits.
Finally, bilinear Q3 equations contains four reflected 3D dAKP equations in 4D space.

As by-products, we obtained a 4D 8-point equation $M_1^\prime=0$, i.e., \eqref{A.20}
and a 3D 8-point equation $\dot{A}=0$, i.e., \eqref{alter-dAKP},
while $\dot{A}=0$ is a dimensional reduction of $M_1^\prime=0$ (see Appendix \ref{App-A}).
They have solutions composed by the dKdV type PWF.
Although at present we can not obtain $M_1^\prime=0$ as a reduction from the symmetric dAKP system,
we proved it allows Casoratian solutions (and so does the Hirota's form \eqref{tau-solution-Symmetric})
in Appendix \ref{App-A}.

As for further research, we would like to understand connections between the discrete
bilinear Boussinesq equations (see \cite{HZ-2021}) and dAKP equations.
In addition, the ABS equations, discrete Boussinesq equations and discrete KP equations
admit elliptic soliton solutions composed by Weierstrass functions \cite{NA-IMRN-2010,NijSZ-CMP-2023}.
It would be interesting to extend the connections we have found to the level of elliptic solitons.

\vskip 20pt
\subsection*{Acknowledgements}

DJZ is supported by the NSF of China (No.12271334).
JW is currently supported by the China Scholarship Council  visiting-PhD-student program.
KM is supported by JSPS KAKENHI Grant (No.22K03441).
DJZ are grateful to Prof. Hietarinta and Prof. Nijhoff for their discussions.

\appendix
\section{Equation \eqref{alter-dAKP}}\label{App-A}

\subsection{Connection with 4D equations}\label{App-A-1}

Equations in the 4D system \eqref{4D-AKP-equations-1} are featured with total index $(1,1,0,1)$.
Consider a combination
\begin{equation}
 (a_1+a_2)\mathcal{A}_1^{(1)}+(a_1-a_5)\mathcal{A}_2^{(1)}+(a_2-a_5)\mathcal{A}_3
^{(1)}=0
\end{equation}
which is expressed as
\begin{align}\label{4D-combination}
&(a_1\!+\!a_2)(a_1\!-\!a_5)(a_2\!-\!a_5)(a_3\!-\!a_5)M_1\!+\!(a_1\!+\!a_2)(a_1\!-\!a_5)(a_2\!-\!a_5)(a_3\!
+\!a_5)M_2\!+\!M_3\!=\!0,
\end{align}
where
\begin{align*}
M_1\!:=\,& (a_1\!+\!a_3)(a_1\!+\!a_5)(a_2\!-\!a_3)   \tau_{1}^3   \tau_{235}
-\!(a_1\!-\!a_2)(a_1\!-\!a_3)(a_2\!-\!a_3)   \tau^3   \tau_{1235}\nonumber\\
&-\!(a_1\!-\!a_3)(a_2\!+\!a_3)(a_2\!+\!a_5)   \tau_{2}^3   \tau_{135},\\
M_2\!:=\,& (a_1\!-\!a_3)(a_1\!+\!a_5)(a_2\!+\!a_3)   \tau_{13}  \tau_{25}^3\!
-\!(a_1\!-\!a_2)(a_1\!+\!a_3)(a_2\!+\!a_3)   \tau_{3}  \tau_{125}^3\nonumber\\
&-\!(a_1\!+\!a_3)(a_2\!-\!a_3)(a_2\!+\!a_5)   \tau_{23}   \tau_{15}^3,\\
M_3\!:=\,&2a_3(a_1-a_5)(a_1^2-a_3^2)(a_3^2-a_5^2)\{(a_1+a_2)   \tau_{2}   \tau_{15}-(a_1+a_5)
\tau_{5}   \tau_{12}+(a_2-a_5)\tau   \tau_{125}\}\nonumber\\
&-2a_3(a_2-a_5)(a_2^2-a_3^2)(a_3^2-a_5^2)\{(a_1+a_2)   \tau_{1}   \tau_{25}
+(a_1-a_5)\tau   \tau_{125}-(a_2+a_5)   \tau_{5}   \tau_{12}\}\nonumber\\
&+2a_3(a_1+a_2)(a_1^2-a_3^2)(a_2^2-a_3^2)\{(a_1\!-\!a_2)   \tau_{5}   \tau_{12}\!
-\!(a_1\!-\!a_5)   \tau_{2}   \tau_{15}\!+\!(a_2\!-\!a_5)   \tau_{1}   \tau_{25}\}.
\end{align*}
One can use equations \eqref{3D-dAKP} and \eqref{symmetric-3D-dAKP} to simplify $M_3$ to the form
\[4a_3(a_1-a_5)(a_2-a_5)(a_1^2-a_2^2)(a_3^2-a_5^2)\tau   \tau_{125},\]
such that \eqref{4D-combination} can be rewritten as
\begin{align}\label{4D-combination-1}
  &(a_3\!-\!a_5)M_1'\!+(a_3\!+\!a_5)M_2'=0,
\end{align}
where
\begin{align}
  M_1'\!:=\,&(a_1\!+\!a_3)(a_1\!+\!a_5)(a_2\!-\!a_3)   \tau_{1}^3   \tau_{235}\!
  -\!(a_1\!-\!a_2)(a_1\!-\!a_3)(a_2\!-\!a_3)   \tau^3   \tau_{1235}\!\nonumber\\
  &-\!(a_1\!-\!a_3)(a_2\!+\!a_3)(a_2\!+\!a_5)   \tau_{2}^3   \tau_{135}
  +2a_3(a_1-a_2)(a_3+a_5)\tau   \tau_{125},\label{M1}\\
  M_2'\!:=\,&(a_1\!-\!a_3)(a_1\!+\!a_5)(a_2\!+\!a_3)   \tau_{13}   \tau_{25}^3\!
  -\!(a_1\!-\!a_2)(a_1\!+\!a_3)(a_2\!+\!a_3)   \tau_{3}   \tau_{125}^3\!\nonumber\\
  &-\!(a_1\!+\!a_3)(a_2\!-\!a_3)(a_2\!+\!a_5)   \tau_{23}   \tau_{15}^3
  +2a_3(a_1-a_2)(a_3-a_5)\tau   \tau_{125}.\label{M2}
\end{align}
Noted that $M_2'$ is a reflected form of $M_1'$ by taking
\begin{equation}
\tau(n_1,n_2,(N_3,a_3),n_5)\!=\!\tau(n_1,n_2,(-N_3,-a_3),n_5).
\end{equation}
By employing constraint
\begin{equation}\label{A.8}
a_3=a_5,~~\tau(n_1,n_2,N_3,n_5)\rightarrow\tau(n_1,n_2,n_3^\prime=N_3+n_5)
\end{equation}
 on $M_1'$ \eqref{M1} and $M_2'$ \eqref{M2},
they will  reduce  to
\begin{equation}
  \begin{aligned}\label{dot-A-1}
   & (a_2 -a_3)(a_1 + a_3)^2\tau_{1}^3\tau_{233}- (a_1 - a_2) (a_1 - a_3) (a_2 - a_3)\tau^3\tau_{1233} \\
    & -(a_1 - a_3)(a_2 +a_3)^2  \tau_{2}^3 \tau_{133} +4a_3^2  (a_1 - a_2) \tau\tau_{123}
  \end{aligned}
\end{equation}
  and
  \begin{equation}\label{red-M2}
     (a_1+a_3)(a_2+a_3)\{(a_1-a_3) \tau_{13}    \tau_{2}-(a_1-a_2)   \tau_{12} \tau_{3}
     -(a_2-a_3) \tau_{23}   \tau_{1}\}
  \end{equation}
respectively, while \eqref{dot-A-1} is nothing but $\dot{A}$ \eqref{alter-dAKP} and
\eqref{red-M2} contains the dAKP \eqref{3D-dAKP}.

In the next part we will prove both $M_1'$   and $M_2'$ are zero, and so is $\dot{A}$.

\subsection{Proof of solution}\label{App-A-2}

For convenience, we introduce some notations related to the generic column vector
$\psi=\psi(n_1,n_2,\cdots,n_r;l_1)=(\psi_1,\psi_2,\cdots,\psi_N)^T$.
We define
\begin{align*}
 &  m_{\substack{n_i+1}}:=E_{l_1}^mE_{n_i}\psi,~~
 m_{\substack{n_i+1\\n_j-1}}:=E_{l_1}^mE_{n_j}^{-1}E_{n_i}\psi,\\
  &\widetilde{m}_{n_i+j}^{n_{\xi}}:=\Theta_{n_\xi}^{-1} E_{l_1}^mE_{n_i}^j\psi,~~~
  \widetilde{m}_{n_i+j}^{n_{\xi},n_{\theta}}\!:
  =\Theta_{n_\xi}^{-1}\Theta_{n_\theta}^{-1} E_{l_1}^mE_{n_i}^j\psi,
  \end{align*}
where
\begin{equation}
\Theta_{n_\xi}=\mathrm{Diag}(U_1(n_{\xi}), U_2(n_{\xi}), \cdots, U_N(n_{\xi})),~~
U_i(n_{\xi})=k_i^2-a_{\xi}^{2}.
\end{equation}
In the proof, we also replace $N_3^\prime$ with $n_3$ without making confusion.
\begin{theorem}\label{th-7}
The equations
\begin{equation}
M_1'=0,~~     M_2'=0,~~  \dot{A}=0
\end{equation}
share the $\tau$-function
\begin{equation}\label{A.14}
  \tau=\bigl\lvert \widehat{N-1}\bigr\rvert_{[n_\sigma]}= \bigl\lvert \widehat{N-1}\bigr\rvert_{[l_1]},
  ~~ (\sigma=1,2,3,5)
  \end{equation}
composed by $\psi=(\psi_1,\psi_2,\cdots,\psi_N)^T$  with entries
  \begin{align}\label{DR-AKP}
    \psi_j(n_1,n_2,n_3,n_5;l_1 )=&\rho_j^+ k_i^{l_1}(a_1+k_j)^{n_1}(a_2+k_j)^{n_2}
    (a_3+k_j)^{n_3}(a_5+k_j)^{n_5}\nonumber\\
    &+\rho_j^- (-k_j)^{l_1}(a_1-k_j)^{n_1}(a_2-k_j)^{n_2}(a_3-k_j)^{n_3}(a_5-k_j)^{n_5}.
  \end{align}
\end{theorem}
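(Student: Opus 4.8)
The plan is to prove the single Casoratian identity $M_1'=0$ directly, and then to recover $M_2'=0$ and $\dot A=0$ as corollaries. This direct route is necessary because the combination \eqref{4D-combination-1} extracted from the symmetric dAKP system only delivers $(a_3-a_5)M_1'+(a_3+a_5)M_2'=0$; the vanishing of $M_1'$ on its own is not a consequence of the dAKP equations and must be read off from the determinantal structure of \eqref{A.14}–\eqref{DR-AKP}.

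First I would record the shift relations for the entries \eqref{DR-AKP}. A one-line computation gives $E_{l_1}\psi=(E_{n_\sigma}-a_\sigma)\psi$ for $\sigma=1,2,3,5$, so by the column-operation argument of Lemma \ref{Lem-3-1} the up-shift Casoratians coincide, $\bigl\lvert\widehat{N-1}\bigr\rvert_{[n_\sigma]}=\bigl\lvert\widehat{N-1}\bigr\rvert_{[l_1]}$, which is the content of \eqref{A.14}. The down-shifts are the reason for the diagonal matrices $\Theta_{n_\xi}$ with entries $U_i(n_\xi)=k_i^2-a_\xi^2$: since $(k_j^2-a_\xi^2)E_{n_\xi}^{-1}\psi_j$ is again a polynomial combination of the two plane-wave branches, multiplying a down-shifted Casoratian through by the scalar $\det\Theta_{n_\xi}=\prod_iU_i(n_\xi)$ clears the $(a_\xi\pm k_j)^{-1}$ denominators and returns columns of plane-wave type; the tilde-columns $\widetilde m^{n_\xi}=\Theta_{n_\xi}^{-1}E_{l_1}^mE_{n_i}^j\psi$ package exactly this normalisation. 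With these relations I would rewrite each of the eight shifted factors in $M_1'$, namely $\tau_1^3,\tau_{235},\tau^3,\tau_{1235},\tau_2^3,\tau_{135},\tau,\tau_{125}$, as Casoratians built from one common set of columns, with the cross-direction and down-shift data absorbed into $\Theta_{n_3}$- and $\Theta_{n_5}$-factors.

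The heart of the argument is then to exhibit $M_1'$ as a Plücker identity. After extracting the common scalars $\prod_iU_i(n_3)$ and $\prod_iU_i(n_5)$ from the relevant determinants and matching them against the rational prefactors $(a_1\pm a_3),(a_2\pm a_3),(a_i\pm a_5)$, each bilinear product becomes a pair of maximal minors of a single $N\times(N+2)$ matrix sharing $N-1$ columns. The four-term combination $M_1'$ is precisely the three-term Plücker relation among such minors, equivalently the Laplace expansion of a $2N\times2N$ determinant possessing a repeated column block, and hence vanishes identically. This is the standard Freeman--Nimmo mechanism, here encumbered only by the tracking of the $\Theta$-factors.

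With $M_1'=0$ established, $M_2'=0$ follows at once: $M_2'$ is the image of $M_1'$ under the $(N_3,a_3)$-reflection $\tau(n_1,n_2,(N_3,a_3),n_5)=\tau(n_1,n_2,(-N_3,-a_3),n_5)$ recorded after \eqref{M2}, and this reflection is a symmetry of the entries \eqref{DR-AKP} in the sense of \eqref{symmetries-1}, so the identity persists under $a_3\mapsto-a_3$ (it may equally be proved by the same Plücker relation). Finally, imposing the constraint \eqref{A.8}, i.e. $a_3=a_5$ with $n_3'=N_3+n_5$, collapses \eqref{DR-AKP} to the three-direction entries and sends $M_1'$ to $\dot A$ exactly as in \eqref{dot-A-1}; hence $\dot A=0$ on the reduced $\tau$-function. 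The main obstacle I anticipate lies in the middle step: arranging the three $n_3$-down-shifted terms together with the single unshifted term so that all four genuinely share a common column block, and verifying that the $U_i(n_3),U_i(n_5)$ factors pulled out of the determinants recombine with the polynomial coefficients to leave precisely the Plücker combination with the correct signs.
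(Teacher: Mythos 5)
Your proposal is correct, and its core step --- proving $M_1'=0$ by re-expressing all the shifted $\tau$'s as Casoratians shifted in the $l_1$-direction (with the $n_3$-, $n_5$-down-shifts normalised through the matrices $\Theta_{n_\xi}$, which is where the factors $\Gamma(a_3)$, $\Gamma(a_5)$ come from) and then recognising the four-term bilinear combination as the Laplace expansion of a vanishing $2N\times 2N$ determinant with a repeated column block --- is exactly the mechanism used in the paper's proof of Theorem \ref{th-7}. Where you differ is in the economy of the remaining two identities: the paper runs the same determinantal computation three separate times, once for each of $M_1'$, $M_2'$ and $\dot{A}$ (displaying three vanishing $2N\times 2N$ determinants), and only remarks at the very end that $\dot{A}=0$ is also a consequence of \eqref{A.20} together with the reduction \eqref{A.8}; you instead prove only $M_1'=0$ directly, obtain $M_2'=0$ from the reflection $(N_3,a_3)\to(-N_3,-a_3)$ noted after \eqref{M2}, and obtain $\dot{A}=0$ from the reduction \eqref{A.8}. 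Your route for $\dot{A}$ is explicitly endorsed by the paper; your route for $M_2'$ is a genuine shortcut the paper does not take, and it buys you one determinant computation at the cost of a symmetry argument.

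That symmetry argument is the one place needing tightening. The Casoratian entries \eqref{DR-AKP} are \emph{not} literally invariant under $(n_3,a_3)\to(-n_3,-a_3)$: a direct computation gives $\psi_j\mapsto (k_j^2-a_3^2)^{-n_3}\psi_j$, hence the reflected Casoratian equals $\Gamma(a_3)^{-n_3}\tau$ rather than $\tau$; it is the Hirota form \eqref{tau-solution-Symmetric} with PWF \eqref{PWF-symmetric-dAKP}, not the Casoratian, that enjoys the exact symmetry \eqref{symmetries-1}. Your conclusion still holds, but two observations must be added: (i) the Pl\"ucker proof of $M_1'=0$ is insensitive to the sign of $a_3$, so the identity also holds for the $\tau$-function built with parameter $-a_3$ in the $n_3$-direction; and (ii) every product in $M_1'$ (and in $M_2'$) has total index $0$ in the $n_3$-direction, so under the gauge $\tau\to\Gamma(a_3)^{-n_3}\tau$ each product acquires the same overall factor $\Gamma(a_3)^{-2n_3}$, which cancels. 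With (i) and (ii) the reflection indeed transports $M_1'=0$ into $M_2'=0$, and your proof is complete.
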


\begin{proof}
The entry $\psi_j$ satisfies
\begin{align*}
& a_3\psi_j(n_3+1;l_1)-\psi_j(n_3+1;l_1+1)=(a_3^2-k_j^2)\psi_j(n_3;l_1),\label{r-1}\\
& a_5\psi_j(n_5+1;l_1)-\psi_j(n_5+1;l_1+1)=(a_5^2-k_j^2)\psi_j(n_5;l_1), \\
& a_i \psi_j(n_i-1;l_1)+\psi_j(n_i-1;l_1+1)=\psi_j(n_i;l_1), \\
&(a_3+a_i)\psi_j(n_3+1,n_i-1;l_1)=\psi_j(n_3+1,n_i;l_1)+(a_3^2-k_j^2)\psi_j(n_3,n_i-1;l_1), \\
&(a_5+a_i)\psi_j(n_5+1,n_i-1;l_1)=\psi_j(n_5+1,n_i;l_1)+(a_5^2-k_j^2)\psi_j(n_5,n_i-1;l_1), \\
& (a_i-a_r)\psi_j(n_i-1,n_r-1;l_1)=\psi_j(n_i,n_r-1;l_1)-\psi_j(n_i-1,n_r;l_1), \\
&(a_5\!-\!a_3)\psi_j(n_3\!+\!1,n_5\!+\!1;l_1)\!=\!(a_5^2\!-\!k_j^{2})\psi_j(n_3\!
+\!1,n_5;l_1)\!-\!(a_3^2\!-\!k_j^2)\psi_j(n_3,n_5\!+\!1;l_1),
\end{align*}
which give rise to (note that because of \eqref{A.14} here the Casoratians are displayed
with respect to column shifting in $l_1$-direction)
\begin{align*}
& \tau^{i}=\left\lvert \psi(n_i-1;0)~~\psi(n_i-1;1)~~\cdots~~\psi(n_i-1;N-1)\right\rvert\nonumber\\
  &~~~=\left\lvert \psi(n_i-1;0)~~\psi(n_i;0)~~\cdots~~\psi(n_i;N-2)\right\rvert\nonumber\\
&~~~=\bigl\lvert 0_{n_i-1}~~0~~\cdots~~N-2\bigr\rvert
=\bigl\lvert 0_{n_i-1}~~\widehat{N-2}\bigr\rvert,\\
  &\tau^{ij}=\Bigl\lvert 0_{n_i-1\atop n_j-1}~~0_{n_j-1}~~\widehat{N-3}\Bigr\rvert
  =\Bigl\lvert 0_{n_i-1\atop n_j-1}~~0_{n_i-1}~~\widehat{N-3}\Bigr\rvert,\\
  &\tau^{ijr}=\Biggl\lvert 0_{\substack{n_i-1\\n_j-1\\n_r-1}}~~
  0_{n_i-1\atop n_j-1}~~0_{n_j-1}~~\widehat{N-4}\Biggr\rvert,\\
  &\tau_{3}=\Gamma(a_3) \bigl\lvert \widetilde{0}_{n_3+1}^{n_3}~~\widehat{N-2}\bigr\rvert,~~~~
  \tau_{33}=(\Gamma(a_3))^2
  \bigl\lvert \widetilde{0}_{n_3+2}^{n_3,n_3}~~\widetilde{0}_{n_3+1}^{n_3}~~\widehat{N-3}\bigr\rvert,\\
&  \tau_{35}=\Gamma(a_3)\Gamma(a_5)
\Bigl\lvert \widetilde{0}_{\substack{n_3+1\\n_5+1}}^{n_3,n_5}~~
\widetilde{0}_{n_5+1}^{n_5}~~\widehat{N-3}\Bigr\rvert,\\
&\tau_{35}^j=\Gamma(a_3)\Gamma(a_5)
  \Biggl\lvert \widetilde{0}_{\substack{n_3+1\\n_5+1\\n_j-1}}^{n_3,n_5}~~
  \widetilde{0}_{\substack{n_5+1\\n_j-1}}^{n_5}
  ~~0_{n_j-1}~~\widehat{N-4}\Biggr\rvert,\\
 &(a_i-a_j) \tau^{ij}=\left\lvert 0_{n_j-1}~~0_{n_i-1}~~\widehat{N-3}\right\rvert,\\
 & (a_3+a_j)(a_5+a_j)\tau_{35}^j=\Gamma(a_3)\Gamma(a_5)
 \left\lvert \widetilde{0}_{\substack{n_3+1\\n_5+1}}^{n_3,n_5}
 ~~\widetilde{0}_{n_5+1}^{n_5}~~0_{n_j-1}
 ~~\widehat{N-4}\right\rvert,\\
&~~~~~~~~~~~~~~~~~~~~~~~~~~~~~~\cdots\cdots\nonumber
\end{align*}
where $\Gamma(a_j)=|\Theta_{n_j}|$.
Accordingly, $E_{n_2}^{-1}E_{n_1}^{-1}M_1^{'}$ turns to
\begin{equation*}
  \begin{split}
  &\Gamma(a_3)\Gamma(a_5)\left(\bigl\lvert 0_{n_3-1}~~0_{n_2-1}~~\widehat{N-3}\bigr\rvert
  \left\lvert 0_{n_1-1}~~\widetilde{0}_{\substack{n_3+1\\n_5+1}}^{n_3,n_5}
  ~~\widetilde{0}_{n_5+1}^{n_5}~~\widehat{N-4}\right\rvert\right.\\
 &-\left\lvert \widetilde{0}_{\substack{n_3+1\\n_5+1}}^{n_3,n_5}~~\widetilde{0}_{n_5+1}^{n_5}
 ~~\widehat{N-3}\right\rvert
 \bigl\lvert 0_{n_3-1}~~0_{n_2-1}~~0_{n_1-1}~~\widehat{N-4}\bigr\rvert\\
  &-\bigl\lvert 0_{n_3-1}~~0_{n_1-1}~~\widehat{N-3}\bigr\rvert
  \left\lvert 0_{n_2-1}~~\widetilde{0}_{\substack{n_3+1\\n_5+1}}^{n_3,n_5}~~\widetilde{0}_{n_5+1}^{n_5}
  ~~ \widehat{N-4}\right\rvert\\
  &\left.+\bigl\lvert 0_{n_2-1}~~0_{n_1-1}~~\widehat{N-3}\bigr\rvert
  \left\lvert 0_{n_3-1}~~\widetilde{0}_{\substack{n_3+1\\n_5+1}}^{n_3,n_5}~~\widetilde{0}_{n_5+1}^{n_5}
  ~~\widehat{N-4}\right\rvert\right),
  \end{split}
\end{equation*}
which is the expansion of following determinant
\begin{equation}
\begin{split}
&\Gamma(a_3)\Gamma(a_5)\,\left | \begin{matrix}
  \widehat{N-4} & N-3 & 0_{n_3-1}& 0_{n_2-1} & 0_{n_1-1} & 0 & 0 & 0 \\
  0 & N-3 &0_{n_3-1} & 0_{n_2-1} & 0_{n_1-1} & \widetilde{0}_{\substack{n_3+1\\n_5+1}}^{n_3,n_5} &
  \widetilde{0}_{n_5+1}^{n_5} & \widehat{N-4}
  \end{matrix} \right |,
\end{split}
\end{equation}
which is zero.
Hence we have
\begin{equation}
\begin{aligned}
  M_1'\!:=~&\!(a_1\!+\!a_3)(a_1\!+\!a_5)(a_2\!-\!a_3)   \tau_{1}^3   \tau_{235}\!
  -\!(a_1\!-\!a_2)(a_1\!-\!a_3)(a_2\!-\!a_3)   \tau^3   \tau_{1235}\!\\
  &-\!(a_1\!-\!a_3)(a_2\!+\!a_3)(a_2\!+\!a_5)   \tau_{2}^3   \tau_{135}+2a_3(a_1-a_2)(a_3+a_5)\tau   \tau_{125}=0.
\end{aligned}\label{A.20}
\end{equation}
Similarly, for $E_{n_2}^{-1}E_{n_1}^{-1}M_2^{'}$, we have
\begin{equation*}
  \begin{split}
  &\Gamma(a_3)\Gamma(a_5)\left(
  \bigl\lvert \widetilde{0}_{n_3+1}^{n_3}~~0_{n_2-1}~~\widehat{N-3}\bigr\rvert
  \left\lvert \widetilde{0}_{\substack{n_5+1\\n_3-1}}^{n_5}~~0_{n_3-1}~~0_{n_1-1}~~
  \widehat{N-4}\right\rvert\right.\\
 &-\left\lvert \widetilde{0}_{\substack{n_5+1\\n_3-1}}^{n_5}~~0_{n_3-1}~~\widehat{N-3}\right\rvert
 \bigl\lvert \widetilde{0}_{n_3+1}^{n_3}~~0_{n_2-1}~~0_{n_1-1}~~\widehat{N-4}\bigr\rvert\\
  &-\bigl\lvert \widetilde{0}_{n_3+1}^{n_3}~~0_{n_1-1}
  ~~\widehat{N-3}\bigr\rvert\left\lvert \widetilde{0}_{\substack{n_5+1\\n_3-1}}^{n_5}
  ~~0_{n_3-1}~~0_{n_2-1}~~\widehat{N-4}\right\rvert\\
  &\left.+\bigl\lvert 0_{n_2-1}~~0_{n_1-1}~~\widehat{N-3}\bigr\rvert
  \left\lvert \widetilde{0}_{n_3+1}^{n_3}~~\widetilde{0}_{\substack{n_5+1\\n_3-1}}^{n_5}~~0_{n_3-1}
  ~~\widehat{N-4}\right\rvert\right),
  \end{split}
\end{equation*}
which corresponds to following determinant
\begin{equation}
  \begin{split}
  &\Gamma(a_3)\Gamma(a_5)\,\left | \begin{matrix}
    \widehat{N-4} & N-3 & 0 & 0_{n_2-1} & 0_{n_1-1} & 0 & \widetilde{0}_{n_3+1}^{n_3} & 0 \\
    0 & N-3 &0_{n_3-1} & 0_{n_2-1} & 0_{n_1-1} & \widetilde{0}_{\substack{n_5+1\\n_3-1}}^{n_5}
    & \widetilde{0}_{n_3+1}^{n_3} & \widehat{N-4}
    \end{matrix} \right |,
  \end{split}
  \end{equation}
which is zero as well.
So we have
  \begin{equation}
  \begin{aligned}\label{M2-equation}
    M_2'\!:=~&\!(a_1\!-\!a_3)(a_1\!+\!a_5)(a_2\!+\!a_3)
    \tau_{13}   \tau_{25}^3\!-\!(a_1\!-\!a_2)(a_1\!+\!a_3)(a_2\!+\!a_3)   \tau_{3}   \tau_{125}^3\!\\
    &-\!(a_1\!+\!a_3)(a_2\!-\!a_3)(a_2\!+\!a_5)   \tau_{23}   \tau_{15}^3+2a_3(a_1-a_2)(a_3-a_5)\tau \tau_{125}=0.
  \end{aligned}
\end{equation}
Finally,   $E_{n_2}^{-1}E_{n_1}^{-1}\dot{A}$ is converted to
\begin{equation*}
  \begin{split}
& (\Gamma(a_3))^2 \left(\bigl\lvert 0_{n_3-1}~~0_{n_2-1}~~\widehat{N-3}\bigr\rvert
\left\lvert 0_{n_1-1}~~ \widetilde{0}_{\substack{n_3+2}}^{n_3,n_3}~~\widetilde{0}_{\substack{n_3+1}}^{n_3}
~~
\widehat{N-4}\right\rvert\right.\\
&- \bigl\lvert \widetilde{0}_{n_3+2}^{n_3,n_3}~~\widetilde{0}_{n_3+1}^{n_3}~~\widehat{N-3}\bigr\rvert
\left\lvert 0_{\substack{n_3-1}}~~0_{n_2-1}~~0_{n_1-1}~~\widehat{N-4}\right\rvert\\
&-\bigl\lvert 0_{n_3-1}~~0_{n_1-1}~~\widehat{N-3}\bigr\rvert
\left\lvert 0_{n_2-1}~~\widetilde{0}_{\substack{n_3+2}}^{n_3,n_3}~~\widetilde{0}_{\substack{n_3+1}}^{n_3}
~~\widehat{N-4}\right\rvert\\
&\left.+\bigl\lvert 0_{n_2-1}~~0_{n_1-1}~~\widehat{N-3}\bigr\rvert
\left\lvert 0_{n_3-1}~~\widetilde{0}_{n_3+2}^{n_3,n_3}~~\widetilde{0}_{n_3+1}^{n_3}
~~\widehat{N-4}\right\rvert\right),
  \end{split}
\end{equation*}
which is equivalent to the the zero valued determinant
\begin{equation}
  \begin{split}
  &(\Gamma(a_3))^2 \,\left | \begin{matrix}
    \widehat{N-4} & N-3 & 0_{n_3-1}& 0_{n_2-1} & 0_{n_1-1} & 0 & 0 & 0 \\
    0 & N-3 &0_{n_3-1} & 0_{n_2-1} & 0_{n_1-1} & \widetilde{0}_{\substack{n_3+2}}^{n_3,n_3}
    & \widetilde{0}_{n_3+1}^{n_3} & \widehat{N-4}
    \end{matrix} \right |.
  \end{split}
  \end{equation}
Thus we arrive at
\begin{equation}\label{A.21}
\begin{aligned}
  \dot{A}:=& (a_2 -a_3)(a_1 + a_3)^2\tau_{1}^3\tau_{233}- (a_1 - a_2) (a_1 - a_3) (a_2 - a_3)\tau^3\tau_{1233} \\
  & -(a_1 - a_3)(a_2 +a_3)^2  \tau_{2}^3 \tau_{133} +4a_3^2  (a_1 - a_2) \tau\tau_{123}=0.
\end{aligned}
\end{equation}
Note that
it is also a consequence
of \eqref{A.20} together with the reduction \eqref{A.8}
that \eqref{A.21} holds.

\end{proof}

\small

\end{document}